\crefname{hypothesis}{Hypothesis}{Hypotheses}
\title{Finite-Time Influence Systems \\ and the Wisdom Of Crowd Effect\thanks{Submitted to the editors DATE.
\funding{This material is
  based upon work supported by, or in part by, the U.S.\ Army
  Research Laboratory and the U.S.\ Army Research Office under grant
  number W911NF-15-1-0577. This work has been done while Barbara Franci was a PhD Student sponsored by a TIM/Telecom Italia grant.
}}}
\author{Francesco Bullo\thanks{Mechanical
  Engineering Department and the Center of Control, Dynamical-Systems
  and Computation, UC Santa Barbara, CA 93106-5070,
  USA. \email{bullo@engineering.ucsb.edu}}
\and Fabio Fagnani\thanks{Department
of Mathematical Sciences, Politecnico di Torino, Torino, Italy. \email{fabio.fagnani@polito.it}}
\and Barbara Franci\footnotemark[3]\thanks{Department
of Mathematical Sciences, Politecnico di Torino, Torino, Italy. \email{barbara.franci@polito.it.}}}
\DeclareMathOperator{\diag}{diag}
\newtheorem{example}{Example}
\newcommand{\real}{\mathbb{R}}
\newcommand{\realnonnegative}{\mathbb{R}_{\geq0}}
\newcommand{\integernonnegative}{\mathbb{Z}_{\geq0}}
\def\N{\mathbb{N}}
\def\natural{\mathbb{N}}
\newcommand{\mc}{\mathcal}
\newcommand{\onenorm}[1]{\|#1\|_1}
\newcommand{\bigonenorm}[1]{\big\|#1\big\|_1}
\newcommand{\Bigonenorm}[1]{\Big\|#1\Big\|_1}
\newcommand{\inftynorm}[1]{\|#1\|_{\infty}}
\newcommand{\simplex}[1]{\Delta_{#1}}
\newcommand{\taumix}{\subscr{\tau}{mix}}
\DeclareMathOperator{\EE}{{\mathbb{E}}}
\DeclareMathOperator{\Var}{Var}
\DeclareMathOperator{\Prob}{\mathbb{P}}
\newcommand{\e}{\textrm{e}}
\newcommand{\enne}[1]{#1^{[n]}}
\DeclareMathOperator{\ave}{ave}
\newcommand{\BB}{\mc B}
\newcommand{\until}[1]{\{1,\dots, #1\}}
\newcommand{\fromto}[2]{\{#1,\dots, #2\}}
\newcommand{\subscr}[2]{#1_{\textup{#2}}}
  \newcommand{\supscr}[2]{#1^{[#2]}}
\newcommand{\setdef}[2]{\{#1 \; | \; #2\}}
\newcommand{\bigsetdef}[2]{\big\{#1 \; \big| \; #2\big\}}
\newcommand{\Bigsetdef}[2]{\Big\{#1 \; \Big| \; #2\Big\}}
\newcommand{\map}[3]{#1: #2 \rightarrow #3}
\newcommand\oprocendsymbol{\hbox{$\square$}}
\newcommand\oprocend{\relax\ifmmode\else\unskip\hfill\fi\oprocendsymbol}
\DeclareSymbolFont{bbold}{U}{bbold}{m}{n}
\DeclareSymbolFontAlphabet{\mathbbold}{bbold}
\newcommand{\vect}[1]{\mathbbold{#1}}
\newcommand{\ds}{\displaystyle}
\newcounter{saveenum}
\newcommand{\Barabasi}{Barab{\'a}si}
\newcommand{\Erdos}{Erd{\"o}s}
\newcommand{\Renyi}{R{\'e}nyi}
\newcommand{\naive}{na\"ive\xspace}
\begin{document}

\maketitle

\begin{abstract}
  Recent contributions have studied how an influence system may affect the
  wisdom of crowd phenomenon. In the so-called na\"ive learning setting, a
  crowd of individuals holds opinions that are statistically independent
  estimates of an unknown parameter; the crowd is wise when the average
  opinion converges to the true parameter in the limit of infinitely many
  individuals. Unfortunately, even starting from wise initial opinions, a
  crowd subject to certain influence systems may lose its wisdom.  It is of
  great interest to characterize when an influence system preserves the
  crowd wisdom effect.

  In this paper we introduce and characterize numerous wisdom preservation
  properties of the basic French-DeGroot influence system model. Instead of
  requiring complete convergence to consensus as in the previous na\"ive
  learning model by Golub and Jackson, we study finite-time executions of
  the French-DeGroot influence process and establish in this novel context
  the notion of prominent families (as a group of individuals with outsize
  influence).  Surprisingly, finite-time wisdom preservation of the
  influence system is strictly distinct from its infinite-time version. We
  provide a comprehensive treatment of various finite-time wisdom
  preservation notions, counterexamples to meaningful conjectures, and a
  complete characterization of equal-neighbor influence systems.
\end{abstract}

\begin{keywords}
  Opinion Dynamics, Influence System, Wisdom of Crowd, Learning
\end{keywords}

\section{Introduction}

\textit{Problem description and motivation.} Social scientists describe as
\emph{wisdom of crowds} the phenomenon whereby large groups of people are
collectively smarter than single individuals. Much research has focused on
understanding the conditions under which such a phenomenon occurs.  It is
generally assumed that required conditions include diversity of opinions,
independence of opinions, and a system that aggregates individual opinions
without introducing bias.  Indeed, one plausible explanation for this
phenomenon in certain estimation tasks is that: (i) each individual opinion
is influenced by an independent zero-mean noise and, therefore, (ii) taking
averages of large numbers of individual opinions will reduce the effect of
noise.

A natural system that aggregates individual opinions is an influence
system, that is, a social dynamic process whereby individual opinions,
starting from their respective initial conditions, evolve possibly towards
consensus.  So it is natural to study the conditions under which an
influence system enhances, preserves, or diminishes the wisdom of crowd
effect. Motivated precisely by such reasoning, an insightful \emph{\naive
  learning model} was recently proposed by Golub and
Jackson~\cite{BG-MOJ:10}.  In this model, initial individual opinions are
measurements of an unknown parameter corrupted by independent zero-mean
noise. The law of large numbers then says that the crowd is initially
\emph{wise}, in the sense that the average initial opinion equals the
unknown parameter in the limit of infinitely many individuals.  Moreover,
and more relevant here, the social influence process is \emph{wisdom
  preserving} if, starting from wise individual estimates, the crowd
remains wise after the execution of the influence process. In other words,
the influence system is wisdom preserving if its execution does not
introduce bias in the crowd's estimate of the unknown parameter.

This paper is a contribution to the \naive learning model; instead of
requiring that the opinion formation process is allowed infinite time in
order for the crowd to reach consensus, we here investigate the more
realistic setting in which the influence system is executed only for finite
time. Accordingly, we investigate the \emph{finite-time wisdom preservation
  properties} of the influence system. For such finite-time settings, we
aim to define a number of wisdom notions, provide rigorous
characterizations, and work out instructive examples. In what follows, in
the interest of brevity and consistency with the original
work~\cite{BG-MOJ:10}, we refer to a wisdom preserving influence system as
a \emph{wise} influence system.

{\textit{Literature review.}}  Aristotle is thought to be the first author
to write about the ``wisdom of the crowd'' in his work ``Politics.''
Galton~\cite{FG:1907} describes a famous classic experiment concerning the
weight of a slaughtered ox and documents how the average estimate of the
ox's weight was indeed remarkably close to its correct value.  In recent
years, Surowiecki's book~\cite{JSU:04} has widely popularized the concept and discussed potential applications to group decision making
  and forecasting.

The literature on opinion dynamics and influence systems is very rich. The
classic and widely-established French-DeGroot averaging model is discussed
in the text on influence systems by Friedkin and Johnsen~\cite{NEF-ECJ:11},
the text on social networks by Jackson~\cite{MOJ:10}, and the recent survey
by Proskurnikov and~Tempo~\cite{AVP-RT:17}.  Inextricably linked with
influence systems is the concept of centrality in its various incarnations.
Especially relevant to this work is the notion of the eigenvector
centrality, e.g., see the foundational works by Bonacich~\cite{PB:72c} and
Friedkin~\cite{NEF:91}.  \cite{PMD-DV-JZ:03} provides a bounded rational
interpretation, a low-rank property, and the asymptotic behavior over
reducible graphs of the French-DeGroot averaging model.  The phenomenon of
wisdom of crowds is related also to the topic of social learning. We refer
to the original work~\cite{BG-MOJ:10} for a insightful comparison between
the \naive learning approach and other distinct methods. We here only
outline two alternative approaches. First, Acemoglu~\emph{et
  al}~\cite{DA-MAD-IL-AO:11} present a game-theoretical model for
sequential learning; even in this setup the presence of ``excessively
influential agents'' is an impediment to social learning. Second, the
social learning model proposed by Jadbabaie~\emph{et
  al}~\cite{AJ-AS-ATS:10} is based on the constant arrival of new
information (i.e., a feature our model does not have) and shows that new
information together with basic connectivity properties is sufficient to
overcome the influence of any individual agent.

The understanding of what social influence processes do indeed enhance
or diminishes the wisdom of a crowd continues to be of very recent
interest and debate.  Lorenz~\emph{et al}~\cite{JL-HR-FS-DH:11}
describe an empirical study in which the influence system diminishes
the accuracy of the collective estimate by diminishing diversity of
the crowd (even while conducing the opinions closer to consensus). In
contrast, Becker~\emph{et al}~\cite{JB-DB-DC:17} present theoretical
and empirical evidence about settings in which the influence system
enhances the wisdom of the crowd.
                
{\textit{Contributions.}}
We start by reviewing the \naive learning model for large populations
proposed by Golub and Jackson~\cite{BG-MOJ:10} and based on the
French-DeGroot opinion formation process. In this model, the
population is wise if the final (in the limit as time $k\to\infty$)
consensus opinion is equal to the correct value.  This property is
cast in terms of the left dominant eigenvector of the influence matrix
in large populations (i.e., in the limit as the number of individual
$n\to\infty$).  Specifically, a wise population is a sequences of
row-stochastic matrices whose left dominant eigenvector satisfies a
``vanishing maximum entry condition.''  This condition ensures that no
individual's initial erroneous opinion has an outside impact on the
final consensus opinion, thereby rendering the population unwise.

In this paper we consider a variation of the \naive learning setting in which
the French-DeGroot opinion formation process is allowed only finite time and
does not, therefore, reach completion. Our individuals do not reach consensus
and, in this paper, a population is finite-time wise if the average of the
individuals opinion remains correct as time progresses along the opinion
dynamics process.  In other words, \cite{BG-MOJ:10} considers wisdom in the
limit in which $n\to\infty$ after $k\to\infty$, we here consider the limit in
which $n\to\infty$ after at $k=1$, $k$ fixed, and uniformly over $k$.  We argue
that these finite-time variations are especially relevant for large population,
since it is known that the time required for consensus to be approximately
achieved typically diverges as the population size diverges. { Note
  that even controlled experiments are executed with a finite time constraint,
  e.g., see the recent work in~\cite{JB-DB-DC:17} where $k$ is precisely equal
  to 2.} It is therefore of interest to assume that no enough time is provided
to the opinion formation process to achieve complete consensus and inquire what
are wise sequences under this relaxed condition.

Given this premise, this paper makes four main contributions.  First,
for our proposed finite-time setting we introduce the notions of
one-time wisdom, finite-time wisdom, uniform wisdom, and pre-uniform
wisdom.  We provide necessary and sufficient characterizations of
one-time and finite-time wisdom in terms of the limiting value of the
maximum column averages of the matrix sequence. We also provide a
sufficient condition for uniform wisdom involving the same limiting
value as well as the limiting value of the matrices' mixing time.

Second, we provide numerous detailed examples of graph families to
establish, among other relationships, that finite-time wisdom neither
implies nor is implied by wisdom, as previously defined. Our examples
illustrate how various general implications among the various wisdom
notions are tight. Our examples also demonstrate that the proposed
notions of one-time and finite-time wisdom are meaningful and strictly
distinct from the notion of (infinite-time) wisdom as defined by Golub
and Jackson~\cite{BG-MOJ:10}.

Third, we provide a sufficient condition to ensure that a sequence of
row-stochastic matrices is indeed finite-time wise. We introduce an
appropriate novel notion of prominent family and show how its absence
implies finite-time wisdom in general. Roughly speaking, a collection
of nodes (a family) is prominent if, in the limit of large population,
its size is negligible (that is, of order $o(n)$) but its total
accorded one-time influence is not (that is, of order~1 in $n$).

Fourth, we then extend our sufficient condition to the setting of
equal-neighbor sequences of row-stochastic matrices.  For such
highly-structured case, we show that the absence of prominent
individuals is a necessary and sufficient condition for finite-time
wisdom, pre-uniform wisdom, and wisdom in the equal-neighbor setting.
In other words, we completely characterize finite-time wisdom in the
equal-neighbor setting. We apply these findings to the case of
preferential attachment model and show how (1) the classic
\Barabasi-Albert model (with attachment probability linearly
proportional to degree) is wise in all possible senses (finite-time,
uniformly and infinite-time), whereas (2) a super-linear preferential
attachment model loses all its wisdom properties.

{\textit{Paper organization.}}
Section~\ref{sec:definition+basicresults} contains various wisdom
definitions and corresponding general conditions.
Section~\ref{sec:examples} contains the analysis of several
counterexamples showing the general difference among the various
wisdom notions.  Section~\ref{sec:prominent-families} contains a
sufficient condition for finite-time wisdom and
Section~\ref{sec:iff-equal-neighbor} characterizes pre-uniform wisdom
in sequences of equal-neighbor matrices.
Section~\ref{sec:conclusions} contains some concluding remarks.

\subsection{Review of notation and preliminary concepts}
Let $\vect{1}_n\in \real^{n\times 1}$ denote the vector of all ones.  Given
$x\in\real^n$, we define its \emph{average} by $
\ave(x)=\frac{1}{n}\vect{1}_n^\top x=\frac{1}{n}\sum_{i=1}^nx_i$.  Given
$x\in\real^n$, its \emph{norms} by $\|x\|_1 = \sum_i |x_i|$, $\|x\|_2 =
(\sum_i x_i^2)^{1/2}$ and $\|x\|_\infty = \max_i|x_i|$.  Let
$\simplex{n}=\setdef{x\in\real^n}{x\geq0, \vect{1}_n^\top{x}=1}$.  Given
$x\in\simplex{n}$, the following inequalities hold
\begin{equation} \label{ineq:norms-on-simplex}
  \|x\|_{\infty}^2 \leq x^\top x \leq \|x\|_{\infty}.
\end{equation}
Indeed, the left-hand side is self evident, while right-hand side follows
from $x^\top x\leq \|x\|_{\infty}\|x\|_1=\|x\|_{\infty}$.

{\textit{Stochastic matrices.}}
A matrix $P\in\real^{n\times{n}}$ is non-negative ($P\geq0$) if
$P_{ij}\geq 0$ for each $i, j$. For such a matrix, the maximum column
sum is
\begin{equation*}
  \onenorm{P} = \max_{j\in\until{n}} \sum_{i=1}^n P_{ij} =
  \max \vect{1}_n^\top P.
\end{equation*}
Note that $\onenorm{\frac{1}{n}P}$ is the maximum column average of
$P$.  To a nonnegative matrix $P$ we associate an unweighted directed
graph $G=(\until{n},E)$ where $E=\setdef{(i,j)\in
  \until{n}^2}{P_{ij}>0}$. Given a node $j$, let $N_j$ denote the set
of out-neighbors of $j$.  The nonnegative matrix $P$ is
\emph{irreducible} if $G$ is strongly connected and \emph{primitive}
if $G$ is strongly connected and aperiodic.

A matrix $P\in\real^{n\times{n}}$ is \emph{stochastic} if $P\geq0$ and
$P\vect{1}_n=\vect{1}_n$.

Given a stochastic irreducible matrix $P$, its \emph{left dominant
  eigenvector} $\pi\in\simplex{n}$ is unique and well defined by the
Perron-Frobenius Theorem and satisfies $\pi^\top P = \pi^\top$.

Given a primitive stochastic matrix $P$, its
\emph{mixing time} is defined by
\begin{equation}
  \label{def:mixingtime}
  \taumix(P) := \inf\Bigsetdef{t\in\natural}{
    \max_{i,j}\sum\nolimits_k|(P)^t_{ik}-P)^t_{jk}|\le\frac{1}{\e}  }\,.
\end{equation}

{\textit{Equal-neighbor models.}}

\begin{definition}[Equal-neighbor, directed equal-neighbor,
    and weighted-neighbor matrices]
  \label{def:equal-neighbor-models}
  Given a nonnegative matrix $W\in\realnonnegative^{n\times{n}}$ with at
  least one positive entry in each row (possibly on the diagonal), define
  the stochastic matrix $P\in\real^{n\times{n}}$ by
  \begin{equation}\label{def:fromWtoP}
  P=\diag(W\vect{1}_n)^{-1}W.
  \end{equation}
  Then the matrix $P$ is said to be
  \begin{enumerate}
  \item \emph{equal-neighbor} if all non-zero entries of $W$ have the same
    value and $W=W^\top$,
  \item \emph{directed equal-neighbor} if all non-zero entries of $W$ have
    the same value and $W\neq W^\top$, and
  \item \emph{weighted-neighbor} if the non-zero entries of $W$ take more
    than one value and $W=W^\top$.
\end{enumerate}
  In these three definitions, the graph associated to $W$ is undirected
  unweighted, directed unweighted, and undirected weighted respectively.
\end{definition}

Note that, given a non-negative weight matrix
$W\in\realnonnegative^{n\times{n}}$, the \emph{weighted out-degree} of
a node $i$ in the weighted digraph associated to $W$ is $d_i=
(W\vect{1}_n)_i = \sum_{j=1}^n W_{ij}$.  Because we assume $d_i>0$ for
all $i$, equation~\eqref{def:fromWtoP} is well-posed and equivalent
to:
\begin{equation*} 
    P_{ij}= d_i^{-1} W_{ij} .
\end{equation*}
If $W$ is symmetric, then $d_i$ is called the \emph{weighted degree}.
If additionally $W$ is binary, then $d_i$ is called the \emph{degree}
and
\begin{equation}
  \label{def:Pij=diinv-Wij}
  P_{ij}= d_i^{-1} W_{ij} = \begin{cases} d_i^{-1}, \qquad &\text{if
    }j\in N_i, \\ 0, &\text{otherwise }.
  \end{cases}
\end{equation}

We conclude with a known useful result. The left dominant eigenvector $\pi$
of an irreducible weighted-neighbor matrix $P$ satisfies, for all
$i\in\until{n}$,
\begin{equation}
  \label{eq:pi-equal-neighbor}
  \pi_i = \frac{d_i}{d_1+\dots+d_n}.
\end{equation}

\section{Wisdom in sequences of stochastic matrices: definitions and basic results}
\label{sec:definition+basicresults}
We consider the classic French-DeGroot model~\cite{JRPF:56,MHDG:74} of
opinion dynamics
\begin{equation}
  \label{def:DeGroot}
  x_i(k+1)=\sum_{j=1}^nP_{ij}x_j(k),
\end{equation}
where $x_i(k)$ denotes the opinion of individual $i$, $i\in\until{n}$
at time $k\in\integernonnegative$.  The coefficient $P_{ij}$ denotes
the weight that individual $i$ assigns to individual $j$ when carrying
out this revision. The matrix $P$ is assumed row-stochastic and
defines a weighted directed graph $G$.

We assume
\begin{equation} \label{eq:truth}
  x_i(0)=\mu+\xi_i(0),
\end{equation}
where the constant $\mu\in\real$ is unknown parameter and the noisy
terms $\xi_i(0)$ are a family of independent Gaussian-distributed
variables such that $\EE[\xi_i(0)]=0$ and
$\Var[\xi_i(0)]=\sigma^2<\infty$ for all $i\in\{1,\dots,n\}$.

We will consider sequences of DeGroot models~\eqref{def:DeGroot} with
increasing dimensions $n$ and denote all relative quantities with a
superscript $[n]$. In particular, the state of the $n$-dimensional
model is $\supscr{x}{n}$.  Given assumption~\eqref{eq:truth} and
assuming $\mu$ is kept constant, the law of large numbers implies
that, almost surely,
\begin{equation*}
  \lim_{n\to\infty} \ave(\supscr{x}{n}(0))) = \mu.
\end{equation*}

\begin{definition}[Wisdom notions]
  \label{def:wisdom-notions}
  Given a \emph{sequence of stochastic matrices of increasing
    dimensions} $\{P^{[n]}\in\real^{n\times{n}}\}_{n\in\natural}$,
  define a sequence of opinion dynamics problems with initial state
  $\{\supscr{x}{n}(0)\in\real^n\}_{n\in\natural}$ satisfying
  assumption~\eqref{eq:truth} and evolution
  $\{\supscr{x}{n}(k)\in\real^n\}_{n\in\natural}$ at times
  $k\in\natural$.  The sequence
  $\{P^{[n]}\in\real^{n\times{n}}\}_{n\in\natural}$, is
  \begin{enumerate}
  \item\label{def:wise:1} \emph{one-time wise} if
    $\lim\limits_{n\to\infty} \ave(\supscr{x}{n}(1))) = \mu$;
  \item\label{def:wise:2} \emph{finite-time wise} if
    $\lim\limits_{n\to\infty} \ave(\supscr{x}{n}(k))) = \mu$, for all
    $k\in\natural$;
  \item\label{def:wise:4} \emph{wise} if $\lim\limits_{n\to\infty}
    \lim\limits_{k\to\infty} \ave(\supscr{x}{n}(k))) = \mu $;
  \item\label{def:wise:3} \emph{uniformly wise} if
    $\lim\limits_{n\to\infty} \sup\limits_{k\in\natural}
    |\ave(\supscr{x}{n}(k)))-\mu|=0$.
  \end{enumerate}
  Here all limits are meant in the probability sense.
\end{definition}

{ The terminology ``wisdom preserving'' instead of ``wise''
  would be more appropriate in this context. Indeed, we here study whether
  the dynamics driven by the matrices $P^{[n]}$ preserves or diminishes the
  crowd's wisdom assumed to hold at initial time. As already remarked in
  the Introduction, we have chosen this simpler name in the interest of
  brevity and consistency with the original work~\cite{BG-MOJ:10}.}

We note two straightforward relations among these notions of wisdom, that
is, \ref{def:wise:2} implies~\ref{def:wise:1} and~\ref{def:wise:3}
implies~\ref{def:wise:4}. More details on such implications are given in
Section~\ref{sec:examples}.

The following theorem provides necessary and sufficient characterizations
of the notions~\ref{def:wise:1}, \ref{def:wise:2}, and~\ref{def:wise:4} of
the above definition. The proof of these characterizations is immediate
from~\cite{BG-MOJ:10,WEP:66}.

\begin{theorem}[Necessary and sufficient conditions for finite-time wisdom and wisdom]
  \label{theo:wisdom}
  Consider a \emph{sequence of stochastic matrices of increasing
    dimensions} $\{P^{[n]}\in\real^{n\times{n}}\}_{n\in\natural}$.
  The sequence is
  \begin{enumerate}
  \item\label{wc1} \emph{one-time wise} if and only if
    $\lim\limits_{n\to\infty} \onenorm{\frac{1}{n} P^{[n]}} =0$;
  \item\label{wc2} \emph{finite-time wise} if and only if, for all
    $k\in\natural$, $\lim\limits_{n\to\infty} \onenorm{\frac{1}{n}
    (P^{[n]})^k} =0$. \setcounter{saveenum}{\value{enumi}}
  \end{enumerate}
  Moreover, assuming $\{P^{[n]}\}_{n\in\natural}$ are primitive, the
  sequence is
  \begin{enumerate}\setcounter{enumi}{\value{saveenum}}
  \item\label{wc4} \emph{wise} if and only if
    $\lim\limits_{n\to\infty} \inftynorm{\pi^{[n]}} =0$, where
    $\pi^{[n]}\in\simplex{n}$ is the left dominant eigenvector of
    $P^{[n]}$, for $n\in\natural$.
  \end{enumerate}
\end{theorem}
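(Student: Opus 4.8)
The plan is to reduce each wisdom notion to a statement about the variance of a single zero-mean Gaussian random variable, and then translate that variance condition into the claimed norm condition using the simplex inequality~\eqref{ineq:norms-on-simplex}. The starting point is the explicit solution $\supscr{x}{n}(k)=(P^{[n]})^k\supscr{x}{n}(0)$ of the recursion~\eqref{def:DeGroot}. First I would compute the average at time $k$: since $P^{[n]}$ is row-stochastic we have $(P^{[n]})^k\vect{1}_n=\vect{1}_n$, so writing $\supscr{x}{n}(0)=\mu\vect{1}_n+\supscr{\xi}{n}(0)$ as in~\eqref{eq:truth} and using $\ave(y)=\tfrac1n\vect{1}_n^\top y$ kills the $\mu$-term and leaves
\begin{equation*}
  \ave(\supscr{x}{n}(k))-\mu
  = \frac{1}{n}\vect{1}_n^\top (P^{[n]})^k \supscr{\xi}{n}(0)
  = \sum_{j=1}^n v^{[n,k]}_j\,\xi_j(0),
  \qquad v^{[n,k]} := \frac{1}{n}\big((P^{[n]})^k\big)^\top\vect{1}_n .
\end{equation*}
The deterministic coefficient $v^{[n,k]}_j=\tfrac1n\sum_i\big((P^{[n]})^k\big)_{ij}$ is the average of column $j$; by row-stochasticity the entries of $v^{[n,k]}$ are nonnegative and sum to one, so $v^{[n,k]}\in\simplex{n}$, and its largest entry is exactly $\onenorm{\frac1n(P^{[n]})^k}$.

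Second, because the $\xi_j(0)$ are independent zero-mean Gaussians with common variance $\sigma^2$, the error $\ave(\supscr{x}{n}(k))-\mu$ is a zero-mean Gaussian with variance $\sigma^2\|v^{[n,k]}\|_2^2$. For a sequence of zero-mean Gaussians, convergence to $0$ in probability is equivalent to the variance tending to $0$; hence wisdom at step $k$ holds if and only if $\|v^{[n,k]}\|_2^2\to0$ as $n\to\infty$. Applying~\eqref{ineq:norms-on-simplex} to $v^{[n,k]}\in\simplex{n}$, namely $\|v^{[n,k]}\|_\infty^2\le (v^{[n,k]})^\top v^{[n,k]}\le\|v^{[n,k]}\|_\infty$, squeezes this to the equivalent condition $\|v^{[n,k]}\|_\infty\to0$. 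Since $\|v^{[n,k]}\|_\infty=\onenorm{\frac1n(P^{[n]})^k}$, taking $k=1$ proves~\ref{wc1} and taking all $k\in\natural$ proves~\ref{wc2}.

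Third, for the infinite-time notion~\ref{wc4} I would pass the inner limit $k\to\infty$ first, at fixed $n$, where primitivity enters: by the Perron--Frobenius theorem $(P^{[n]})^k\to\vect{1}_n(\pi^{[n]})^\top$, so $\supscr{x}{n}(k)$ converges to the consensus value $(\pi^{[n]})^\top\supscr{x}{n}(0)$, and using $(\pi^{[n]})^\top\vect{1}_n=1$,
\begin{equation*}
  \lim_{k\to\infty}\ave(\supscr{x}{n}(k))-\mu
  = (\pi^{[n]})^\top\supscr{x}{n}(0)-\mu
  = (\pi^{[n]})^\top\supscr{\xi}{n}(0).
\end{equation*}
This is again a zero-mean Gaussian, now with variance $\sigma^2\|\pi^{[n]}\|_2^2$, and since $\pi^{[n]}\in\simplex{n}$ the identical simplex-inequality argument yields convergence in probability to $0$ if and only if $\inftynorm{\pi^{[n]}}\to0$, which is~\ref{wc4}.

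The linear-algebra and Gaussian manipulations are routine; the only points that require care are the two probabilistic equivalences. One is the fact that for a family of zero-mean Gaussians convergence in probability is equivalent to vanishing variance (used for all three cases). The other is the order of limits in~\ref{wc4}: one must first take the deterministic consensus limit $k\to\infty$ for fixed $n$ (valid precisely because $P^{[n]}$ is primitive) and only then let $n\to\infty$. These are exactly the steps where assumption~\eqref{eq:truth} and primitivity are used, and where the cited matrix-convergence facts~\cite{BG-MOJ:10,WEP:66} are invoked.
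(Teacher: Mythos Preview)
Your argument is correct and follows essentially the same route as the paper: introduce the column-average vector $v^{[n,k]}=\frac{1}{n}\big((P^{[n]})^k\big)^\top\vect{1}_n\in\simplex{n}$ (the paper calls it $\chi^{[n]}(k)$), observe that $\ave(\supscr{x}{n}(k))-\mu$ is zero-mean with variance $\sigma^2\|v^{[n,k]}\|_2^2$, and use the simplex inequality~\eqref{ineq:norms-on-simplex} to trade $\|\cdot\|_2$ for $\|\cdot\|_\infty=\onenorm{\tfrac1n(P^{[n]})^k}$; the primitive case is handled identically after passing to the Perron limit. The only cosmetic difference is that you justify the equivalence ``convergence in probability $\Leftrightarrow$ variance $\to 0$'' via Gaussianity, whereas the paper phrases it through Chebyshev's inequality---your formulation is in fact slightly cleaner for the ``only if'' direction.
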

  
\begin{proof}
  Define
  \begin{equation}\label{def:flux}
    \chi^{[n]}(k)=\frac{1}{n}\vect{1}_n^\top (P^{[n]})^k
  \end{equation}
  and notice that
$$\ave(\supscr{x}{n}(k))=\mu+(\chi^{[n]}(k))^{\top}\xi^{[n]}(0).$$
Thus,
\begin{align}\label{unbiased}
  \EE[\ave(\supscr{x}{n}(k))]&=\mu,\\
  \Var[\ave(\supscr{x}{n}(k))]&=\sigma^2 (\chi^{[n]}(k))^\top \chi^{[n]}(k).
\end{align}
By Chebyshev's inequality, the convergence in probability of
$\ave(\supscr{x}{n}(k))$ to $\mu$ is equivalent to the convergence to $0$
of the variance.  Statements~\ref{wc1} and \ref{wc2} can now be proven by
applying the inequalities~\eqref{ineq:norms-on-simplex} with
$x=\chi^{[n]}(k)$ and noting that
$\|\chi^{[n]}(k)\|_{\infty}=\onenorm{\frac{1}{n} (P^{[n]})^k}$.

Regarding statement~\ref{wc4}, the primitivity assumption implies
$\lim\limits_{k\to +\infty}\supscr{x}{n}(k)= (\pi^{[n]})^{\top}
\big(\mu\vect{1}_n+\xi^{[n]}(0)\big)\vect{1}_n$ where $\pi^{[n]}$ is the
left dominant eigenvector of $\supscr{P}{n}$. Therefore,
\begin{equation*}
  \lim\limits_{k\to +\infty}\ave(\supscr{x}{n}(k))=\mu+(\pi^{[n]})^{\top}\xi^{[n]}(0),
\end{equation*}
and statement~\ref{wc4} follows by applying the
inequalities~\eqref{ineq:norms-on-simplex} with $x=\pi^{[n]}$.
\end{proof}

\begin{remark}[Estimation theory interpretation]
  Adopting statistics language, our paper is concerned with the properties of
  the arithmetic average estimator $x\mapsto\sum_{i=1}^nx_i/n$ applied to the
  stochastic process $\{x^{[n]}(k)\}_{k\in\natural}$. Equation~\eqref{unbiased}
  implies that, for every $n$ and for every $k$, the arithmetic average
  estimator, regarded as a random variable, is \emph{unbiased}. We are
  interested in \emph{concentration results}, that is, we investigate when, as
  $k$ varies and as $n\to+\infty$, the estimator has its variance converging to
  $0$, namely, when the estimator concentrates around its mean value.
\end{remark}


There is a natural property that, in analogy to the other characterizations
presented in Theorem~\ref{theo:wisdom}, is related to uniform wisdom. We present
it as a separated concept.

\begin{definition}[Pre-uniform wisdom]
  \label{def:pre-uniform-wisdom}A sequence of stochastic matrices of increasing dimensions
 $\{P^{[n]}\in\real^{n\times{n}}\}_{n\in\natural}$ is
 \emph{pre-uniformly wise} if $\lim\limits_{n\to\infty}
 \sup\limits_{k\in\natural}\onenorm{\frac{1}{n}(P^{[n]})^k} =0$.
 \end{definition}

While pre-uniform wisdom is not sufficient to guarantee uniform
wisdom, it surely yields wisdom and finite-time wisdom, as an
immediate consequence of Theorem~\ref{theo:wisdom}.

Finally, for what concerns uniform wisdom, we present a sufficient
condition that turns out to be useful in many applications. Recall the
notion of mixing time from equation~\eqref{def:mixingtime}.

\begin{theorem}[A sufficient condition for uniform wisdom]
  \label{theo:sufficient-uniform}
 A \emph{sequence of primitive stochastic matrices of increasing
   dimensions} $\{P^{[n]}\in\real^{n\times{n}}\}_{n\in\natural}$ is
 uniformly wise if
  \begin{equation*}
    \lim\limits_{n\to+\infty}
    \sup\limits_{k\in\natural}\Bigonenorm{\frac{1}{n}(P^{[n]})^k}
    \taumix(P^{[n]})=0.
  \end{equation*}
\end{theorem}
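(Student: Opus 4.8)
The plan is to reduce the statement to a single second-moment estimate for the random variable $\sup_{k}|\ave(\supscr{x}{n}(k))-\mu|$, mirroring the proof of Theorem~\ref{theo:wisdom} but now controlling all times $k$ at once. Writing $\chi^{[n]}(k)=\frac1n\vect{1}_n^\top(P^{[n]})^k$ as in~\eqref{def:flux} and recalling $\ave(\supscr{x}{n}(k))-\mu=(\chi^{[n]}(k))^\top\xi^{[n]}(0)$, uniform wisdom is the assertion that $\sup_k|(\chi^{[n]}(k))^\top\xi^{[n]}(0)|\to0$ in probability; by Markov's inequality it suffices to prove $\EE[\sup_k|(\chi^{[n]}(k))^\top\xi^{[n]}(0)|^2]\to0$. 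Since $\pi^{[n]}$ does not depend on $k$, I would split $\chi^{[n]}(k)=\pi^{[n]}+(\chi^{[n]}(k)-\pi^{[n]})$ and treat the two contributions separately: the ``stationary'' part $(\pi^{[n]})^\top\xi^{[n]}(0)$ is a single variable whose variance is already governed by Theorem~\ref{theo:wisdom}\ref{wc4}, whereas the ``transient'' part $(\chi^{[n]}(k)-\pi^{[n]})^\top\xi^{[n]}(0)$ should decay geometrically as $k$ grows, at a rate set by the mixing time. Throughout I abbreviate $\tau_n:=\taumix(P^{[n]})$ and $c_n:=\sup_k\onenorm{\frac1n(P^{[n]})^k}$, so the hypothesis reads $c_n\tau_n\to0$.

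The heart of the argument is the transient decay estimate, driven by the Dobrushin coefficient of ergodicity $\delta(Q)=\frac12\max_{i,j}\sum_k|Q_{ik}-Q_{jk}|$. I would invoke two classical facts: (a) $\|v^\top Q\|_1\le\delta(Q)\|v\|_1$ whenever $v^\top\vect{1}=0$, and (b) submultiplicativity $\delta(Q_1Q_2)\le\delta(Q_1)\delta(Q_2)$. Since $\chi^{[n]}(k)-\pi^{[n]}=(\chi^{[n]}(0)-\pi^{[n]})(P^{[n]})^k$ is a vector of zero entry-sum, and since the definition~\eqref{def:mixingtime} of mixing time gives $\delta((P^{[n]})^{\tau_n})\le1/(2e)$, facts (a)--(b) yield the uniform bound $\|\chi^{[n]}(k)-\pi^{[n]}\|_1\le2(2e)^{-\lfloor k/\tau_n\rfloor}$. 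Combining this with $\|\chi^{[n]}(k)-\pi^{[n]}\|_\infty\le2c_n$ and the elementary inequality $\|v\|_2^2\le\|v\|_\infty\|v\|_1$ produces $\|\chi^{[n]}(k)-\pi^{[n]}\|_2^2\le4\,c_n\,(2e)^{-\lfloor k/\tau_n\rfloor}$.

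With this decay in hand the supremum over the infinitely many times $k$ is tamed by the trivial bound $\sup_k a_k^2\le\sum_k a_k^2$ for nonnegative $a_k$. Since $\EE[((\chi^{[n]}(k)-\pi^{[n]})^\top\xi^{[n]}(0))^2]=\sigma^2\|\chi^{[n]}(k)-\pi^{[n]}\|_2^2$, and since each value of the exponent $\lfloor k/\tau_n\rfloor$ is attained exactly $\tau_n$ times, summing the geometric series gives $\EE[\sup_k|(\chi^{[n]}(k)-\pi^{[n]})^\top\xi^{[n]}(0)|^2]\le C\sigma^2 c_n\tau_n$ for an absolute constant $C$, which tends to $0$ precisely by hypothesis. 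The stationary part obeys $\EE[|(\pi^{[n]})^\top\xi^{[n]}(0)|^2]=\sigma^2\|\pi^{[n]}\|_2^2\le\sigma^2 c_n\to0$, using $\|\pi^{[n]}\|_2^2\le\|\pi^{[n]}\|_\infty\le c_n$ and noting that $\tau_n\ge1$ forces $c_n\to0$. Combining the two estimates via $(a+b)^2\le2a^2+2b^2$ then gives $\EE[\sup_k|(\chi^{[n]}(k))^\top\xi^{[n]}(0)|^2]\to0$, which is the desired conclusion.

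The main obstacle is the transient decay estimate of the second paragraph: everything hinges on translating the somewhat opaque mixing-time definition~\eqref{def:mixingtime} into a clean geometric bound on $\|\chi^{[n]}(k)-\pi^{[n]}\|_1$ that is uniform in $k$, and on the observation that there are effectively only $O(\tau_n)$ distinct time scales, so the sum of squared $\ell_2$-distances is of order $c_n\tau_n$ rather than divergent. Once the coefficient-of-ergodicity machinery is in place, the remainder is routine second-moment bookkeeping and does not even require the Gaussianity of the noise, only its finite variance.
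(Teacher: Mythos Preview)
Your proposal is correct and follows essentially the same route as the paper: split $\chi^{[n]}(k)$ into the stationary part $\pi^{[n]}$ and the transient $\chi^{[n]}(k)-\pi^{[n]}$, bound $\|\chi^{[n]}(k)-\pi^{[n]}\|_2^2\le\|\chi^{[n]}(k)-\pi^{[n]}\|_\infty\|\chi^{[n]}(k)-\pi^{[n]}\|_1$ with the $\infty$-norm controlled by $2c_n$ and the $1$-norm decaying geometrically at rate governed by $\tau_n$, and then sum over $k$ to produce the $c_n\tau_n$ factor. The only cosmetic differences are that the paper works directly with probabilities via a union bound plus Chebyshev (rather than your $\EE[\sup_k a_k^2]\le\sum_k\EE[a_k^2]$ followed by Markov), and it cites \cite[Section~4.5]{DAL-YP-ELW:09} for the $\ell_1$ decay rather than deriving it from the Dobrushin coefficient as you do; your version is slightly more self-contained but otherwise the arguments coincide.
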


\begin{proof}
Define
\begin{equation*}
  Y^{[n]}(k)=\ave(\supscr{x}{n}(k)))-\mu = \chi^{[n]}(k)^{\top}\xi^{[n]}(0),
\end{equation*}
where $\chi^{[n]}(k)$ is defined in \eqref{def:flux}. Notice that
\begin{equation*}
  \lim\limits_{k\to +\infty}Y^{[n]}(k)=Y^{[n]}:=(\pi^{[n]})^{\top}\xi^{[n]}(0).
\end{equation*}
Fix $\delta >0$ and notice that
\begin{multline}
  \label{eq:break}
  \Prob\!\Big[\sup\limits_{k\in\N} |Y^{[n]}(k)|>\delta\Big] \leq
  \Prob\!\Big[\sup\limits_{k\in\N} |Y^{[n]}(k)- Y^{[n]}|>\delta/2\Big]+ \Prob[|Y^{[n]}|>\delta/2].
\end{multline}
We now estimate the two right-end-side terms of
equation~\eqref{eq:break}.  Regarding the first term, we fix $k$ and
we use Chebyshev's inequality to obtain
\begin{align}
  \Prob[| Y^{[n]}(k)- Y^{[n]}|>\delta/2]
  &\leq   \frac{4}{\delta^2}\Var[Y^{[n]}(k)- Y^{[n]}]
  \nonumber \\
  &\leq \frac{4}{\delta^2}(\chi^{[n]}(k)-\pi^{[n]})^{\top}(\chi^{[n]}(k)-\pi^{[n]})
  \nonumber \\
  &\leq \frac{4}{\delta^2}\|\chi^{[n]}(k)-\pi^{[n]}\|_{\infty}\|\chi^{[n]}(k)-\pi^{[n]}\|_{1}
  \label{eq:estim1}
\end{align}
Since $\|\chi^{[n]}(k)\|_{\infty}=\onenorm{\frac{1}{n}
    (P^{[n]})^k}$ and $\chi^{[n]}(k)\to\pi^{[n]}$ for $k\to +\infty$, the
  $\infty$-norm factor can be estimated as
\begin{multline}\label{eq:estim3a}
\|\chi^{[n]}(k)-\pi^{[n]}\|_{\infty}\leq \|\chi^{[n]}(k)\|_{\infty}+\|\pi^{[n]}\|_{\infty} \leq 2\sup\limits_{k\in\natural}
\Bigonenorm{\frac{1}{n}(P^{[n]})^k}.
\end{multline}
The final $1$-norm factor in equation~\eqref{eq:estim1} can be estimated in
terms of the mixing time~\cite[Section 4.5]{DAL-YP-ELW:09}
\begin{equation}\label{eq:estim3}
  \|\chi^{[n]}(k)-\pi^{[n]}\|_{1}\leq \exp\left(-\left\lfloor\frac{k}{\taumix(P^{[n]})}\right\rfloor\right).
\end{equation}
Substituting (\ref{eq:estim3a}) and (\ref{eq:estim3}) into
inequality~\eqref{eq:estim1} and using a union bound estimation (also
referred to as Boole's inequality), we now obtain
\begin{align}
  \Prob\!\Big[\sup\limits_{k\in\N}|Y^{[n]}(k)- Y^{[n]}|>\delta/2\Big] 
 & \leq\frac{8}{\delta^2} \sup\limits_{k\in\natural}\Bigonenorm{\frac{1}{n}(P^{[n]})^k}\sum_k \exp\left(-\left\lfloor\frac{k}{\taumix(P^{[n]})}\right\rfloor\right)
  \nonumber \\
  &=\frac{8e}{\delta^2} \sup\limits_{k\in\natural}\Bigonenorm{\frac{1}{n}(P^{[n]})^k}\frac{1}{1- \exp\left(-\frac{1}{\taumix(P^{[n]})}\right)}
  \nonumber \\
  &\leq \frac{8e}{\delta^2} \sup\limits_{k\in\natural}\Bigonenorm{\frac{1}{n}(P^{[n]})^k}\taumix(P^{[n]}). \label{eq:estim4}
\end{align}
The second term in the right-hand-side of equation~\eqref{eq:break}
can be easily bounded using Chebyshev's inequality and the simple
bound~\eqref{ineq:norms-on-simplex}:
\begin{align}
  \Prob[|Y^{[n]}|>\delta/2]
  &\leq\frac{4}{\delta^2}\Var[Y^{[n]}] = \frac{4}{\delta^2}(\pi^{[n]})^{\top}(\pi^{[n]})
  \nonumber \\
  &\leq\frac{4}{\delta^2}\|\pi^{[n]}\|_{\infty}
  \nonumber \\
  &\leq\frac{4}{\delta^2}\sup\limits_{k\in\natural}\Bigonenorm{\frac{1}{n}(P^{[n]})^k}.
  \label{eq:estim2}
\end{align}
The theorem statement follows from putting together the
inequalities~\eqref{eq:break}, \eqref{eq:estim4}, and
\eqref{eq:estim2}.
\end{proof}


\section{Basic implications and counterexamples}
\label{sec:examples}

In the following lemma we show how the basic wisdom notions in
Definitions~\ref{def:wisdom-notions} and~\ref{def:pre-uniform-wisdom}
are related by a simple implication and how four counterexamples show
that no additional statements may be made in general. To construct our
counterexamples, we will rely upon the equal neighbor models as in
Definition~\ref{def:equal-neighbor-models}.

\begin{lemma}[Basic implications and counterexamples]
  \label{lemma:basic-implications}
  If a sequence of primitive stochastic matrices of increasing
  dimensions $\{P^{[n]}\in\real^{n\times{n}}\}_{n\in\natural}$ is uniformly wise or
  pre-uniformly wise, then it is wise and finite-time wise.  Moreover,
  there exist sequences that
  \begin{enumerate}
  \item are neither wise nor finite-time wise (see
    Example~\ref{ex:star}),
  \item are wise, but not one-time wise (see
    Example~\ref{ex:complete+star}),
  \item are finite-time wise, but not wise (see
    Example~\ref{ex:biased-pathgraph}), and
  \item are wise and finite-time wise, but not pre-uniformly wise (see
    Example~\ref{ex:binary-tree}).
\end{enumerate}
\end{lemma}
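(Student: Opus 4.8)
The plan is to prove the lemma in two stages: first the asserted implication that uniform or pre-uniform wisdom yields both wisdom and finite-time wisdom, and second the existence of the four separating sequences. Only the implication requires an argument here; the four counterexamples reduce to verifying the claimed properties of the explicit graph families deferred to Examples~\ref{ex:star}, \ref{ex:complete+star}, \ref{ex:biased-pathgraph}, and~\ref{ex:binary-tree} in Section~\ref{sec:examples}.

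First I would dispatch pre-uniform wisdom, working at the level of the matrices. By definition $\lim_{n\to\infty}\sup_{k\in\natural}\onenorm{\frac{1}{n}(P^{[n]})^k}=0$. For any fixed $k$ the term $\onenorm{\frac{1}{n}(P^{[n]})^k}$ is dominated by the supremum, so $\lim_{n\to\infty}\onenorm{\frac{1}{n}(P^{[n]})^k}=0$, which is precisely the characterization of finite-time wisdom in Theorem~\ref{theo:wisdom}\ref{wc2}. For wisdom I would invoke primitivity: since $(P^{[n]})^k\to\vect{1}_n(\pi^{[n]})^\top$ as $k\to\infty$, the averaged flux $\chi^{[n]}(k)=\frac{1}{n}\vect{1}_n^\top(P^{[n]})^k$ converges to $\pi^{[n]}$, whence by continuity of the norm $\inftynorm{\pi^{[n]}}=\lim_{k\to\infty}\inftynorm{\chi^{[n]}(k)}\le\sup_{k\in\natural}\onenorm{\frac{1}{n}(P^{[n]})^k}$, using $\inftynorm{\chi^{[n]}(k)}=\onenorm{\frac{1}{n}(P^{[n]})^k}$. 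Letting $n\to\infty$ gives $\lim_{n\to\infty}\inftynorm{\pi^{[n]}}=0$, which is the characterization of wisdom in Theorem~\ref{theo:wisdom}\ref{wc4}.

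Next I would handle uniform wisdom at the level of the random averages. Writing $Z_n:=\sup_{k\in\natural}|\ave(\supscr{x}{n}(k))-\mu|$, uniform wisdom is the statement that $Z_n\to0$ in probability. For each fixed $k$ the pathwise bound $|\ave(\supscr{x}{n}(k))-\mu|\le Z_n$ yields the event inclusion $\{|\ave(\supscr{x}{n}(k))-\mu|>\eps\}\subseteq\{Z_n>\eps\}$, so convergence in probability descends to each fixed-time average and finite-time wisdom follows. For wisdom, primitivity again guarantees that the limit $\lim_{k\to\infty}\ave(\supscr{x}{n}(k))=\mu+(\pi^{[n]})^\top\xi^{[n]}(0)$ exists, and since this limit is dominated by the supremum, $|\lim_{k\to\infty}\ave(\supscr{x}{n}(k))-\mu|\le Z_n$ holds almost surely, giving convergence in probability of the limiting average to $\mu$.

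The implication is thus a routine consequence of Theorem~\ref{theo:wisdom} and the definitions; the only points demanding care are the use of primitivity to identify the $k\to\infty$ limit of $\chi^{[n]}(k)$ with $\pi^{[n]}$, so that the pre-uniform bound controls $\inftynorm{\pi^{[n]}}$, and the observation that a supremum dominating every term lets convergence in probability pass to fixed times and to the limiting average. The genuinely substantive content of the lemma lies in the four counterexamples, and I expect the construction and verification of those separating families --- in particular a sequence that is finite-time wise but not wise, and one that is wise and finite-time wise but not pre-uniformly wise --- to be the main obstacle. I would therefore defer that part to the explicit computations on the star, complete-plus-star, biased path, and binary-tree families in Section~\ref{sec:examples}.
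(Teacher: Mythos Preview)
Your proposal is correct and takes essentially the same approach as the paper: the paper's proof of the implication is a one-line remark that it is ``an immediate consequence of Definition~\ref{def:pre-uniform-wisdom} and the characterizations in Theorem~\ref{theo:wisdom},'' and defers the four existence claims to the explicit examples. You simply spell out the details --- the domination of each fixed-$k$ term by the supremum, and the use of primitivity to pass from $\chi^{[n]}(k)$ to $\pi^{[n]}$ --- that the paper leaves implicit.
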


We illustrate this lemma in Figure~\ref{fig:wisdom-sets}.  The proof
of the first statement in the lemma is an immediate consequence of
Definition~\ref{def:pre-uniform-wisdom} and the characterizations in
Theorem~\ref{theo:wisdom}.  The four existence statement are
established by providing explicit examples here below.

Each example includes a figure illustrating a numerical simulation of the
corresponding dynamics. In the simulations the initial opinions are
independently normally distributed with zero mean and unit standard
deviation, except for $\enne{x}_1(0)=1$ (for illustration purposes).

\begin{figure*}[bth]
  \begin{center}
    \includegraphics[width=\textwidth]{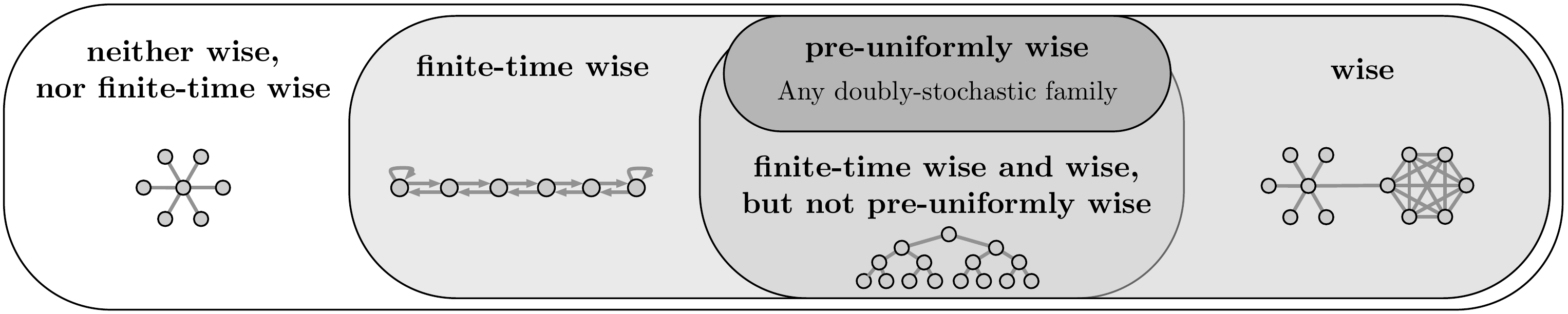}
  \end{center}
  \caption{Logical relations among sets of sequences with varying
    degree of wisdom. For each set we provide an example sequence that
    belongs to that set, but not to any strict subset of it.}
  \label{fig:wisdom-sets}
\end{figure*}


\begin{example}[The star graph is neither wise nor finite-time wise]
  \label{ex:star} For $n\in\natural$, define $\{P^{[n]}\in\real^{n\times{n}}\}_{n\in\natural}$ by 
  \begin{gather*}
    \supscr{P}{n} = \begin{bmatrix} 1/n & 1/n & 1/n & \cdots &
      1/n \\ 1 & 0 & 0 & \vdots & 0 \\ 1 & 0 & 0 & \vdots & 0
      \\ \vdots & \vdots & \vdots & \ddots & \vdots \\ 1 & 0 & 0 &
      \cdots & 0
    \end{bmatrix} .
  \end{gather*}
  In other words, consider the sequence of primitive equal-neighbor
  matrices defined by increasing-dimension star graphs (with a
  self-loop at the central node).  It is easy to see that the dominant
  left eigenvector of $\supscr{P}{n}$ is $\supscr{\pi}{n}
  = \begin{bmatrix} \frac{n}{2n-1} & \frac{1}{(2n-1)} & \dots &
    \frac{1}{(2n-1)}
  \end{bmatrix}^\top$.
  The sequence $\{P^{[n]}\}_{n\in\natural}$ is neither wise, nor
  one-time wise because
  \begin{align*}
    \lim_{n\to\infty} \inftynorm{\pi^{[n]}}  &=   \lim_{n\to\infty} \pi_1^{[n]}  = \frac{1}{2} , \\
    \lim_{n\to\infty} \onenorm{\tfrac{1}{n} P^{[n]}}  &=
    \lim_{n\to\infty} \max_{j\in\until{n}} \frac{1}{n} \sum_{i=1}^{n} \supscr{P}{n}_{ij} \\
    & =  \lim_{n\to\infty}  \frac{1}{n} \sum_{i=1}^{n} \supscr{P}{n}_{i1} 
    =  \lim_{n\to\infty} \frac{n-1+1/n}{n} = 1. 
  \end{align*}
  The lack of wisdom and finite-time wisdom is illustrated in Figure~\ref{fig:stella}.
  \oprocend
  \begin{figure}[H]\centering
    \includegraphics[width=.6\linewidth]{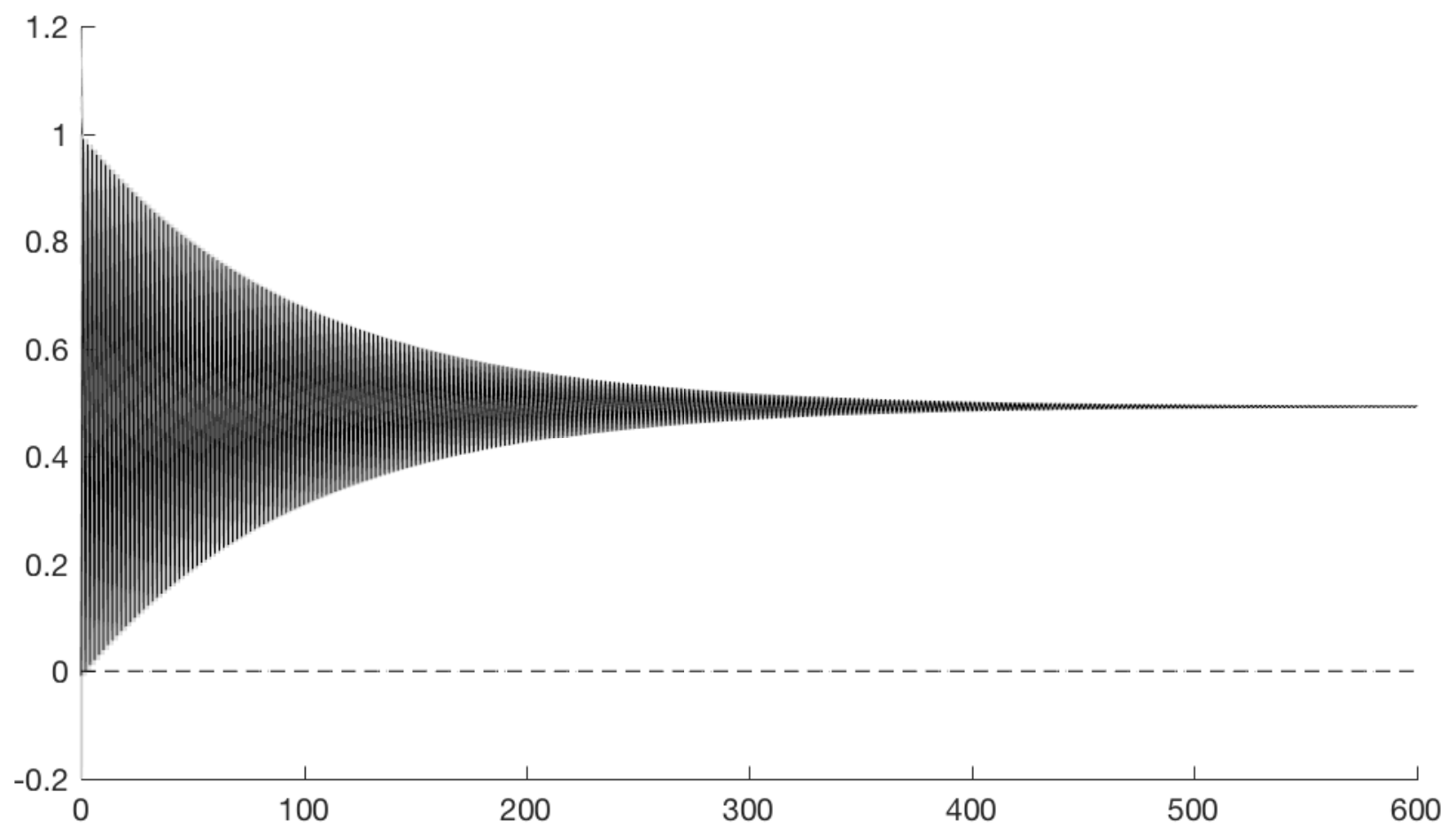}
    \caption{Star graph with $n=100$ nodes (one root and $99$ leaves). The
      time horizon is $T=600$. Individual opinions are in gray, the average
      opinion is in black. As proved in Example~\ref{ex:star}, the figure
      illustrates how the graph is neither wise nor finite-time
      wise.}\label{fig:stella}
  \end{figure}  
\end{example}

\begin{example}[The union/contraction of star and complete graph is wise, but not one-time wise]
  \label{ex:complete+star} For $n\in\natural$, let 
  $S_n\sqcup K_n$ denote the undirected graph with $2n$ nodes obtained
  by (i) computing the union of the star $S_n$ (one center node and
  $n$ leafs) and a complete graph $K_n$, and (ii)
  identifying/contracting one leaf of $S_n$ with a node of $K_n$.
  Accordingly, let $\{\supscr{W}{n}\}_{n\in\natural}$ be the sequence
  of adjacency matrices of $\{S_n\sqcup K_n\}_{n\in\natural}$ and
  $\{P^{[n]}\}_{n\in\natural}$ be the corresponding sequence of
  equal-neighbor matrices. These matrices are primitive because $K_n$
  contains cycles with co-prime length. Note the slight abuse of
  notation: $P^{[n]}$ has dimension $2n$.

  The graph $S_n\sqcup K_n$, for $n=6$, is depicted in
  Figure~\ref{fig:graph-star+complete}, where nodes are numbered as
  follows: node $1$ is always the center of the star and node $n+1$ is
  the node belonging to both the star and the complete graph.
  \begin{figure}[H]
    \begin{center}
      \includegraphics[width=.6\linewidth]{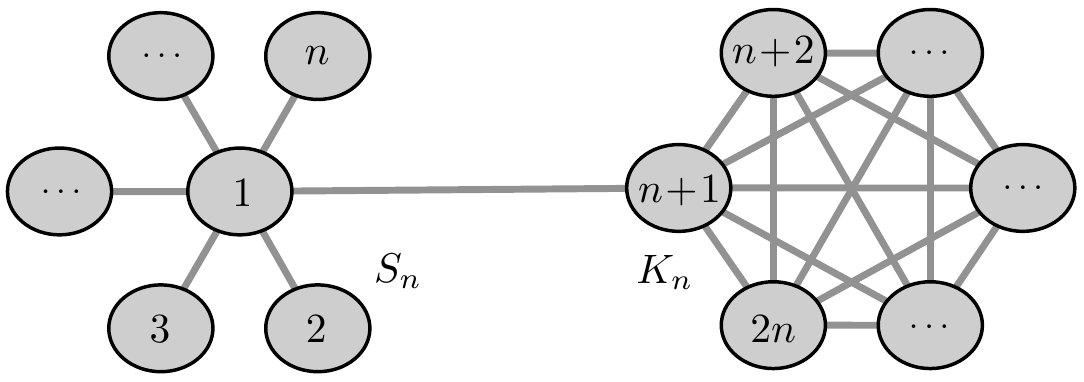}
    \end{center}
    \caption{The union and contraction of a star graph with a complete
      graph.}
    \label{fig:graph-star+complete}
  \end{figure}

  For this graph $S_n\sqcup K_n$, from
  equation~\eqref{eq:pi-equal-neighbor}, we compute the left dominant
  eigenvector of $\supscr{P}{n}$. First, we observe that: $d_1=n$,
  $d_j=1$ for a generic leaf $j\in\fromto{2}{n}$, $d_{n+1}=n$, and
  $d_h=n-1$ a generic node $h\in\fromto{n+2}{2n}$ in $K_n$. Hence, the
  sum of all degrees is $n+1\cdot(n-1)+n+(n-1)^2=n^2+n$ and the left
  dominant eigenvector satisfies:
  \begin{gather*}
    \supscr{\pi}{n}_1 = \frac{n}{n^2+n},\enspace
    \supscr{\pi}{n}_j = \frac{1}{n^2+n},\\
    \supscr{\pi}{n}_{n+1} = \frac{n}{n^2+n},\enspace
    \supscr{\pi}{n}_h = \frac{n-1}{n^2+n}.
  \end{gather*}
  Next, we compute the column sums of $\supscr{P}{n}$.  For the two
  special nodes we have
  \begin{align*}
    \sum_{i=1}^{2n} \supscr{P}{n}_{i1} &= \sum_{i=2}^{n} \supscr{P}{n}_{i1}  + \supscr{P}{n}_{n+1,1} 
    = (n-1) + \frac{1}{n},\\
    \sum_{i=1}^{2n} \supscr{P}{n}_{i,n+1} &= 
    \supscr{P}{n}_{1,n+1} + \sum_{i=n+2}^{2n} \supscr{P}{n}_{i,n+1}
   = 1 + \frac{1}{n},
  \end{align*}
  and, for a generic leaf $j\in\fromto{2}{n}$ in $S_n$ and a generic
  $h\in\fromto{n+2}{2n}$ in $K_n$, we have
  \begin{align*}
    \sum_{i=1}^{2n} \supscr{P}{n}_{ij} &= \frac{1}{n}, \\
    \sum_{i=1}^{2n} \supscr{P}{n}_{ih} &=
    \supscr{P}{n}_{n+1,h}  + 
    \sum_{i=n+2}^{2n} \supscr{P}{n}_{ih}  = \frac{1}{n} + (n-2)\frac{1}{n-1}.
  \end{align*}

  In summary, we note that the sequence $\{P^{[n]}\}_{n\in\natural}$ is
  wise but not one-time wise because
  \begin{align*}
    \lim_{n\to\infty} \inftynorm{\pi^{[n]}}  &=   \lim_{n\to\infty} \pi_1^{[n]}  = \lim_{n\to\infty} \frac{n}{n^2+n} = 0, \\
    \lim_{n\to\infty} \onenorm{\tfrac{1}{2n} P^{[n]}}  &=
    \lim_{n\to\infty} \max_{j\in\until{2n}} \frac{1}{2n} \sum_{i=1}^{2n} \supscr{P}{n}_{ij} \\
    &=  \lim_{n\to\infty}  \frac{1}{2n} \sum_{i=1}^{2n} \supscr{P}{n}_{i1}
    =  \lim_{n\to\infty}  \frac{n-1+\frac{1}{n}}{2n} = \frac{1}{2}. 
  \end{align*}
  The lack of one-time wisdom and the presence of wisdom is illustrated in
  Figure~\ref{fig:stella+completo}. \oprocend
\begin{figure}[H]\centering
\includegraphics[width=.6\linewidth]{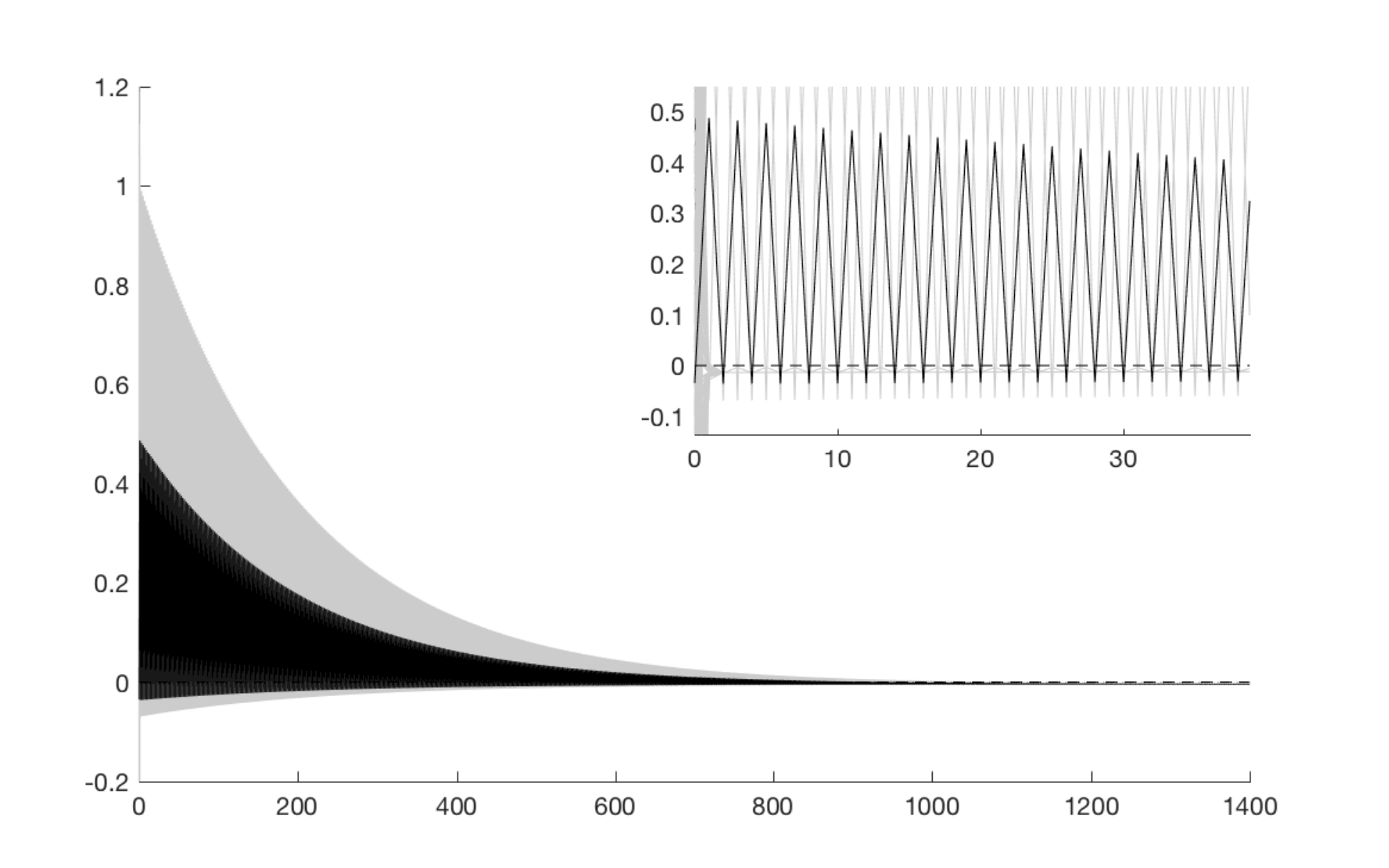}
\caption{Union/contraction of a star and a complete graph with $n=200$
  nodes. The time horizon is $T=1400$. Individual opinions are in gray, the
  average opinion is in black. As proved in Example~\ref{ex:complete+star},
  this graph is wise but not one-time wise.  The right upper-corner figure
  illustrates the very first steps of the evolution and shows that, for
  $k=1$, the average of the opinions moves away form
  zero.\label{fig:stella+completo}}
\end{figure}
\end{example}

\begin{example}[The path graph with biased weights is finite-time wise, but not wise]
  \label{ex:biased-pathgraph}
  Consider a sequence of increasing-dimension path graphs whose biased
  weights are selected as follows: (i) pick a constant $\nu>1$ and
  define the unique scalars $q>p>0$ satisfying $q/p=\nu$ and $p+q=1$,
  and (ii) define the $n$-dimensional weighted digraph as in
  Figure~\ref{fig:graph-directedpath} (self-loops at node $1$ and
  $n$).  Let $\{P^{[n]}\}_{n\in\natural}$ denote the sequence of
  primitive stochastic adjacency matrices of the the path graphs with
  biased weights. We have:
  \begin{figure}[H]
    \begin{center}
      \includegraphics[width=.7\linewidth]{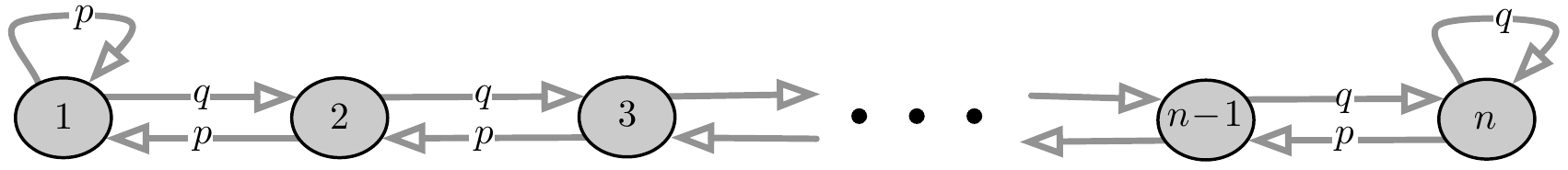}
    \end{center}
    \caption{Directed path with biased weights}
    \label{fig:graph-directedpath}
  \end{figure}
  \begin{gather*}
    \supscr{P}{n} = \begin{bmatrix} 
    p   & q              & \cdots   & 0      & 0  \\
    p      & 0       & q   & \ddots & 0      \\
    \vdots     & \ddots        & \ddots   & \ddots & 0      \\
    0          & \ddots     & p    & 0 & q  \\
    0      & 0        & \cdots        & p    & q
    \end{bmatrix}.    
  \end{gather*}
   Let $\supscr{\pi}{n}$ be the dominant left eigenvector of
   $\supscr{P}{n}$. We claim that
  \begin{equation}
    \label{eq:pi-biasedpath}
    \supscr{\pi}{n}_1 = \frac{\nu-1}{\nu^n-1}, \qquad \supscr{\pi}{n}_i = \nu^{i-1}\supscr{\pi}{n}_1,
    \text{ for $i\in\fromto{2}{n}$}.
  \end{equation}
  We prove this claim as follows. From $(\supscr{\pi}{n})^\top=(\supscr{\pi}{n})^\top \supscr{P}{n}$, we get
  \begin{align*}
     \supscr{\pi}{n}_1 p + \supscr{\pi}{n}_2 p &= \supscr{\pi}{n}_1, \\
     \supscr{\pi}{n}_{i-1} q + \supscr{\pi}{n}_{i+1} p &= \supscr{\pi}{n}_i, \quad \text{for } i\in\fromto{2}{n-1},\\
     \supscr{\pi}{n}_{n-1} q + \supscr{\pi}{n}_n q &= \supscr{\pi}{n}_n.
  \end{align*}
  From the first equality we immediately have $\supscr{\pi}{n}_2=\nu\supscr{\pi}{n}_1$. Next
  we prove by recursion that $\supscr{\pi}{n}_i=\nu^{i-1}\supscr{\pi}{n}_1$. This statement is
  true for $i=2$. Assuming it is true for arbitrary $i$, we compute
  \begin{align*}
    \supscr{\pi}{n}_{i+1} &= \frac{1}{p}\supscr{\pi}{n}_i-\frac{q}{p}\supscr{\pi}{n}_{i-1} =  \frac{1}{p}\nu^{i-1}\supscr{\pi}{n}_1-\frac{q}{p}\nu^{i-2}\supscr{\pi}{n}_1  \\
    &= \Big( \frac{1}{p}\nu -\frac{q}{p} \Big) \nu^{i-2} \supscr{\pi}{n}_1 =
    \frac{1-p}{p} \nu^{i-1} \supscr{\pi}{n}_1 = \frac{q}{p} \nu^{i-1} \supscr{\pi}{n}_1.
  \end{align*}
  The value of $\supscr{\pi}{n}_1$ in~\eqref{eq:pi-biasedpath} follows from
  requiring
  $1=\sum_{i=1}^n\supscr{\pi}{n}_i=\sum_{i=1}^n\nu^{i-1}\supscr{\pi}{n}_1=\frac{\nu^n-1}{\nu-1}\supscr{\pi}{n}_1$. This
  concludes our proof of the formula~\eqref{eq:pi-biasedpath} for
  $\supscr{\pi}{n}$.

  Note that from formula~\eqref{eq:pi-biasedpath}, we know
  $\supscr{\pi}{n}_n=\frac{\nu-1}{\nu^n-1}\nu^{n-1} \asymp
  \frac{\nu-1}{\nu} \text{ for }n\to\infty$.  In summary, we note that
  the sequence $\{P^{[n]}\}_{n\in\natural}$ is not wise but
  finite-time wise because
  \begin{align*}
    \lim_{n\to\infty} \inftynorm{\pi^{[n]}}  &=   \lim_{n\to\infty} \pi_n^{[n]}  = \frac{\nu-1}{\nu}> 0, \\
    \lim_{n\to\infty} \onenorm{\tfrac{1}{n} (P^{[n]})^k}  &\leq
    \lim_{n\to\infty} \onenorm{\tfrac{1}{n} P^{[n]}}^k  \\ &=
    \lim_{n\to\infty} \Big( \max_{j\in\until{n}} \frac{1}{n} \sum_{i=1}^{n} \supscr{P}{n}_{ij} \Big)^k\\
    &\leq  \lim_{n\to\infty}  \frac{2^k}{n}  = 0,  \quad \text{for all }k\in\natural, 
  \end{align*}
  where we used the fact that the in-degree of each node is upper bounded
  by $2$. The presence of finite time wisdom and the lack of wisdom are
  illustrated in Figure~\ref{fig:biased-pathgraph}.  \oprocend
  \begin{center}
    \begin{figure}[H]
      \centering
      \includegraphics[width=.6\linewidth]{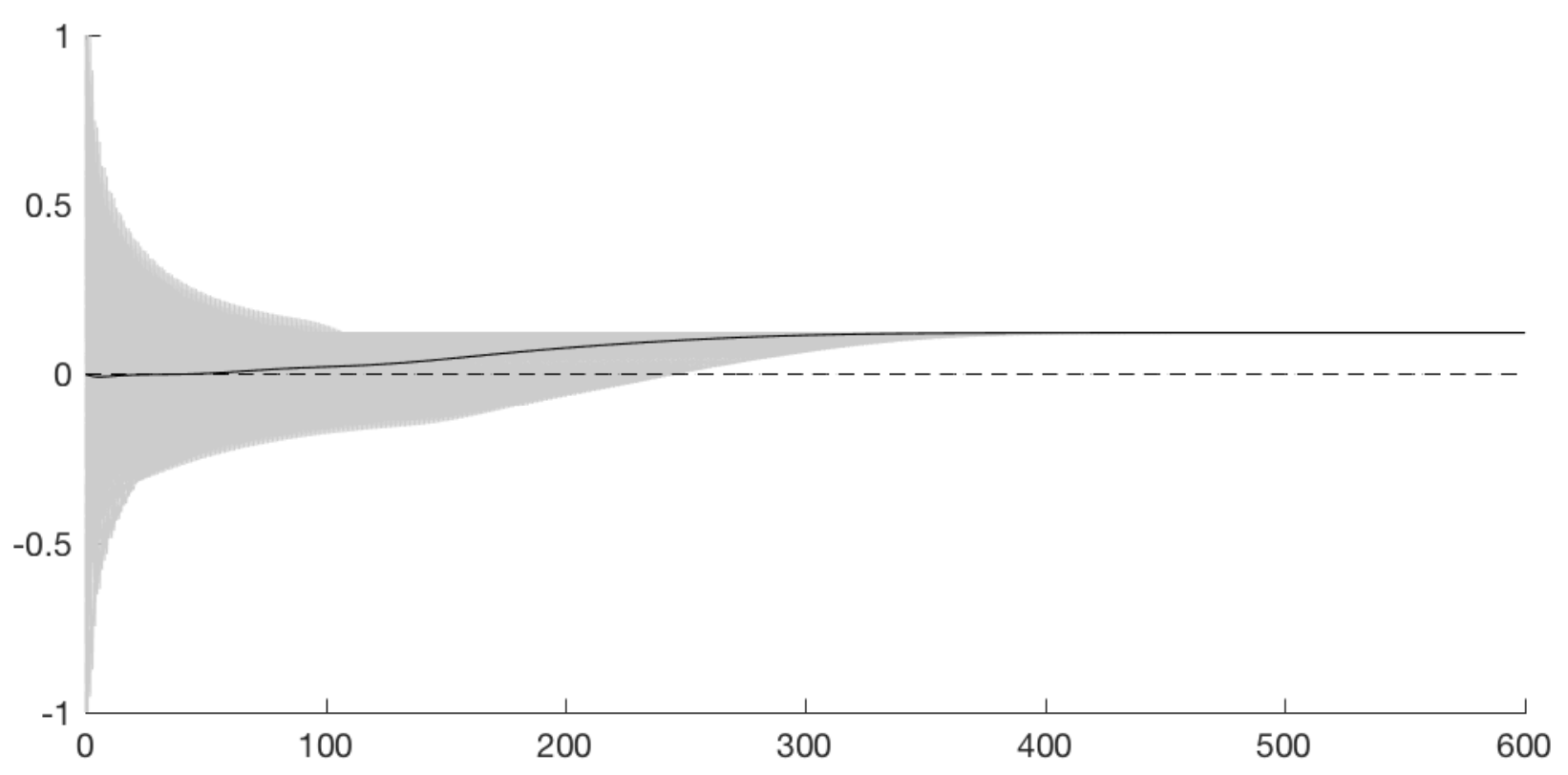}
      \caption{Path graph with $n=100$ nodes and weight parameters $p=1/3$
        and $q=2/3$. The time horizon is $T=600$. Individual opinions are
        in gray, the average opinion is in black. We observe
          that the average opinion is approximately zero for a duration of
          time proportional to the group size and then transitions to a
          steady-state error. As proved in
        Example~\ref{ex:biased-pathgraph}, the graph is finite-time wise
        but not wise.}\label{fig:biased-pathgraph}
    \end{figure}
  \end{center}    
\end{example}

\begin{example}[The reversed binary tree with root-leaves
    edges is wise and finite-time wise, but not pre-uniformly
    wise]\label{ex:binary-tree} We define a sequence of directed
  equal-neighbor model (see Definition~\ref{def:equal-neighbor-models}) by
  defining the corresponding sequence of binary (not symmetric)
  $\supscr{W}{n}$ matrices.  We consider a binary tree with $L$ layers and
  $n$ nodes, where $n=1+\dots+2^{L-1}=\frac{2^{L}-1}{2-1}=2^{L}-1$. As in
  Figure~\ref{fig:graph-heart}, we label the nodes as follows: $v_1^{(1)}$
  at layer $1$, $v_1^{(2)},v_2^{(2)}$ at layer $2$, and, more generally,
  $v_1^{(\ell)},\cdots,v_{2^{\ell-1}}^{(\ell)}$ at layer $\ell$, for
  $\ell\in\until{L}$. Note the self-loop at node $v_1^{(1)}$.
  \begin{figure}[H]
    \begin{center}
      \includegraphics[width=.7\linewidth]{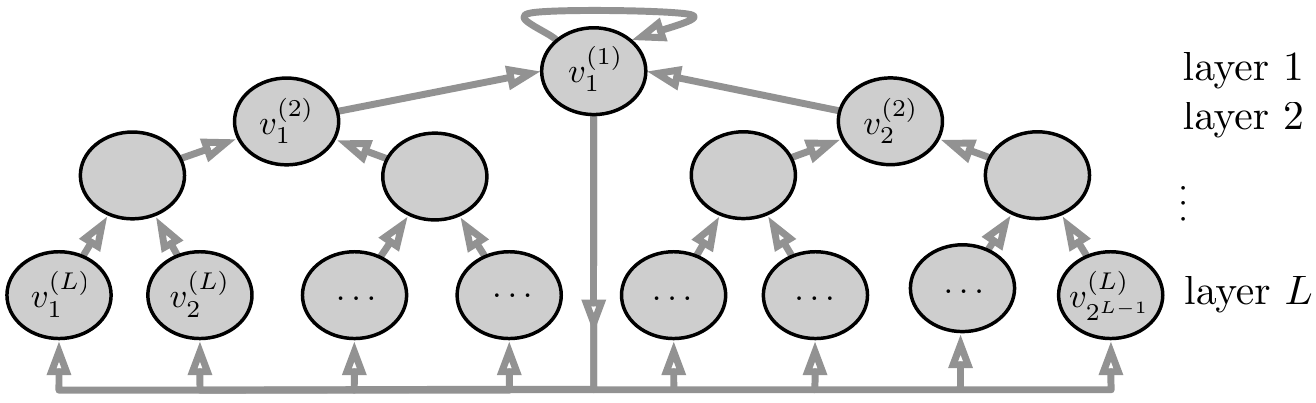}
    \end{center}
    \caption{A reversed binary directed tree with root-leaves edges.}
    \label{fig:graph-heart}
  \end{figure}

  
  We now compute the left dominant eigenvector $\supscr{\pi}{n}$.  Pick
  $\ell\in\until{L}$.  For symmetry reasons, we have
  $\supscr{\pi}{n}(v_1^{(\ell)}) = \cdots =
  \supscr{\pi}{n}(v_{2^{\ell-1}}^{(\ell)})$ so that, defining $\pi^{(\ell)}
  = \supscr{\pi}{n}(v_1^{(\ell)}) + \cdots +
  \supscr{\pi}{n}(v_{2^{\ell-1}}^{(\ell)})$, we compute $\pi^{(\ell)}
  =2^{\ell-1} \supscr{\pi}{n}(v_1^{(\ell)})$.
  An aggregation argument leads to $\pi^{(2)} = \cdots = \pi^{(L)}
  =\frac{1}{L+1}$ and $\pi^{(1)} =2\pi^{(2)}=\frac{2}{L+1}$.  Hence, in
  summary, for each node $h\in\until{2^{\ell-1}}$ at layer $\ell$,
  \begin{equation*}
    \supscr{\pi}{n}(v_h^{(\ell)}) =
    \begin{cases}
      \frac{2}{L+1}, \quad & \text{if } \ell=1,\\
      \frac{1}{2^{\ell-1}}\frac{1}{L+1}, \qquad & \text{if }  \ell>1.
    \end{cases}
  \end{equation*}  
  We can now state that the sequence is wise, since
  \begin{align*}
    \inftynorm{\pi^{[n]}}  &= 
    \max_{\ell\in\until{L}}
    \max_{h\in\until{2^{\ell-1}}}
    \supscr{\pi}{n}(v_h^{(\ell)}) \\ &=
    \supscr{\pi}{n}(v_1^{(1)}) = \frac{2}{L+1} = \frac{1}{\log_2(n+1)},
  \end{align*}
  and therefore $\lim_{n\to\infty} \inftynorm{\pi^{[n]}} =0$.

  Next, we note $\sum_{i=1}^n\supscr{P}{n}_{ij}\leq
  \sum_{i=1}^n\supscr{W}{n}_{ij} \leq 3$, where we used
  $\supscr{P}{n}_{ij}=d_i^{-1}\supscr{W}{n}_{ij}$, $d_i\geq1$, and
  knowledge of the fact that each node has at most $3$ in-edges. We can now
  state that the sequence is finite-time wise since, for all fixed time
  $k$,
  \begin{align*}
    \lim_{n\to\infty} \onenorm{\tfrac{1}{n} (P^{[n]})^k}  
    &\leq    \lim_{n\to\infty} \tfrac{1}{n} \onenorm{ P^{[n]}}^k  \\
    & =
    \lim_{n\to\infty} \tfrac{1}{n} \Big( \max_{j\in\until{n}} \sum_{i=1}^{n} \supscr{P}{n}_{ij} \Big)^k
    \\
    &\leq  \lim_{n\to\infty}  \frac{3^k}{n}  = 0.
  \end{align*}

  Finally, the sequence is not pre-uniformly wise because, when
  $k=k(n)=L-1=\log_2(n+1)-1$, we estimate
  \begin{equation*}
    \lim_{n\to\infty} \frac{1}{n} \sum_{i=1}^n (\supscr{P}{n})^{k(n)}_{i v_1^{(1)}} \geq
    \lim_{n\to\infty} \frac{1}{n} 2^{k(n)} =
    \lim_{n\to\infty} \frac{n+1}{2n}  = \frac{1}{2},
  \end{equation*}
  where the first inequality follows if one deletes the term relative to
  the self loop in the matrix $\supscr{P}{n}$ and observes that, being all
  nodes but $v_1^{(1)}$ of out-degree $1$, $\sum_{i\neq v_1^{(1)}}
  (\supscr{P}{n})^{k(n)}_{i v_1^{(1)}}$ coincides with the number of paths
  of length $L-1$ from the $L$-th layer to $v_1^{(1)}$.
  
   Figure~\ref{fig:binary-tree} illustrates the behavior of the dynamics in
   the very first steps proving that it is not pre-uniformly wise but that
   the average opinion slowly converges. Because of the computational
   complexity of the simulations (due to the large number of nodes), we are
   unable to clearly illustrate the asymptotic value on this graph for
   large number of layers $L$ or times $k$.\oprocend
    \begin{figure}[H]
    \begin{center}
      \includegraphics[width=.6\linewidth]{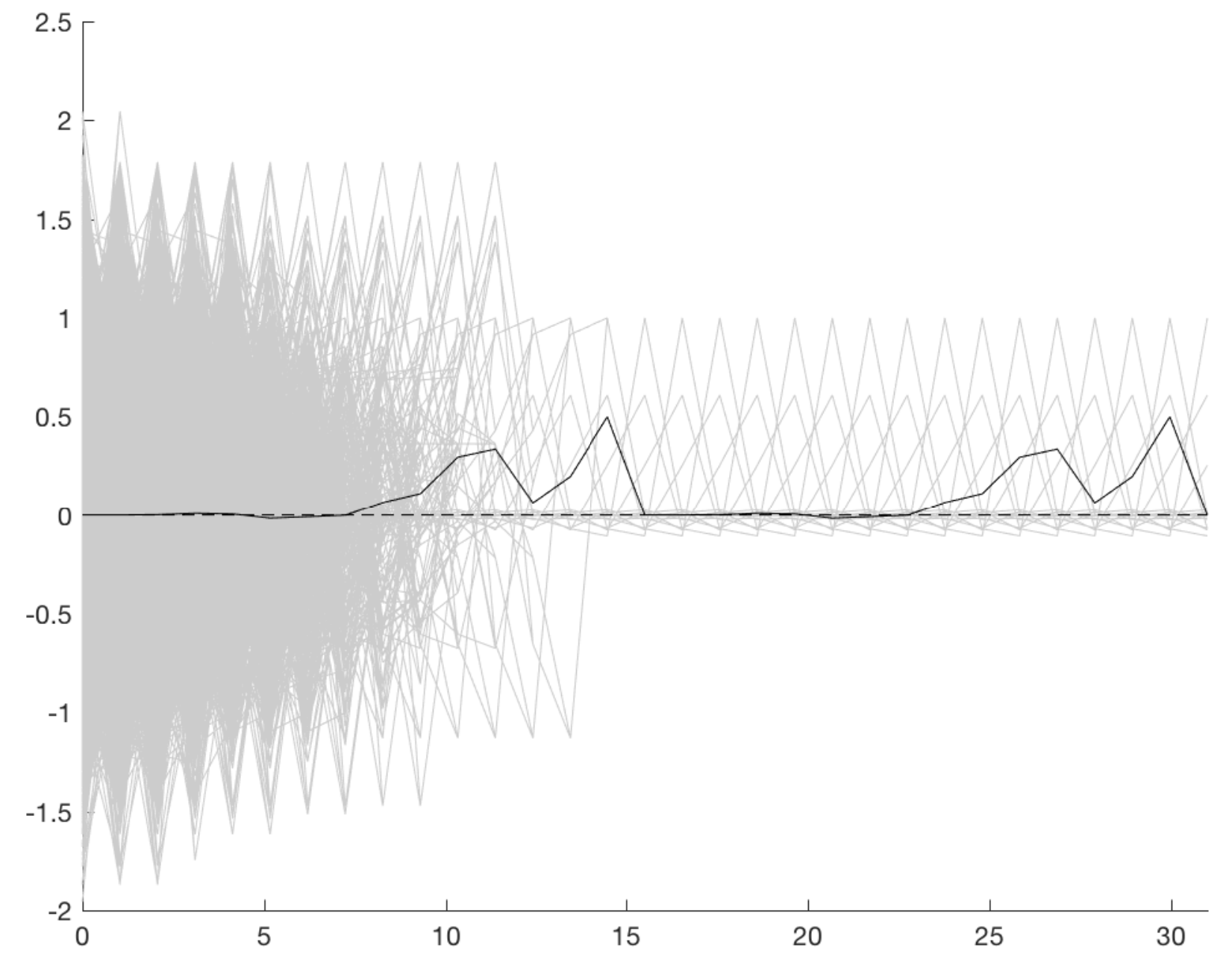}
          \end{center}
          \end{figure}
       \begin{figure}[H]
    \begin{center}
      \includegraphics[width=.6\linewidth]{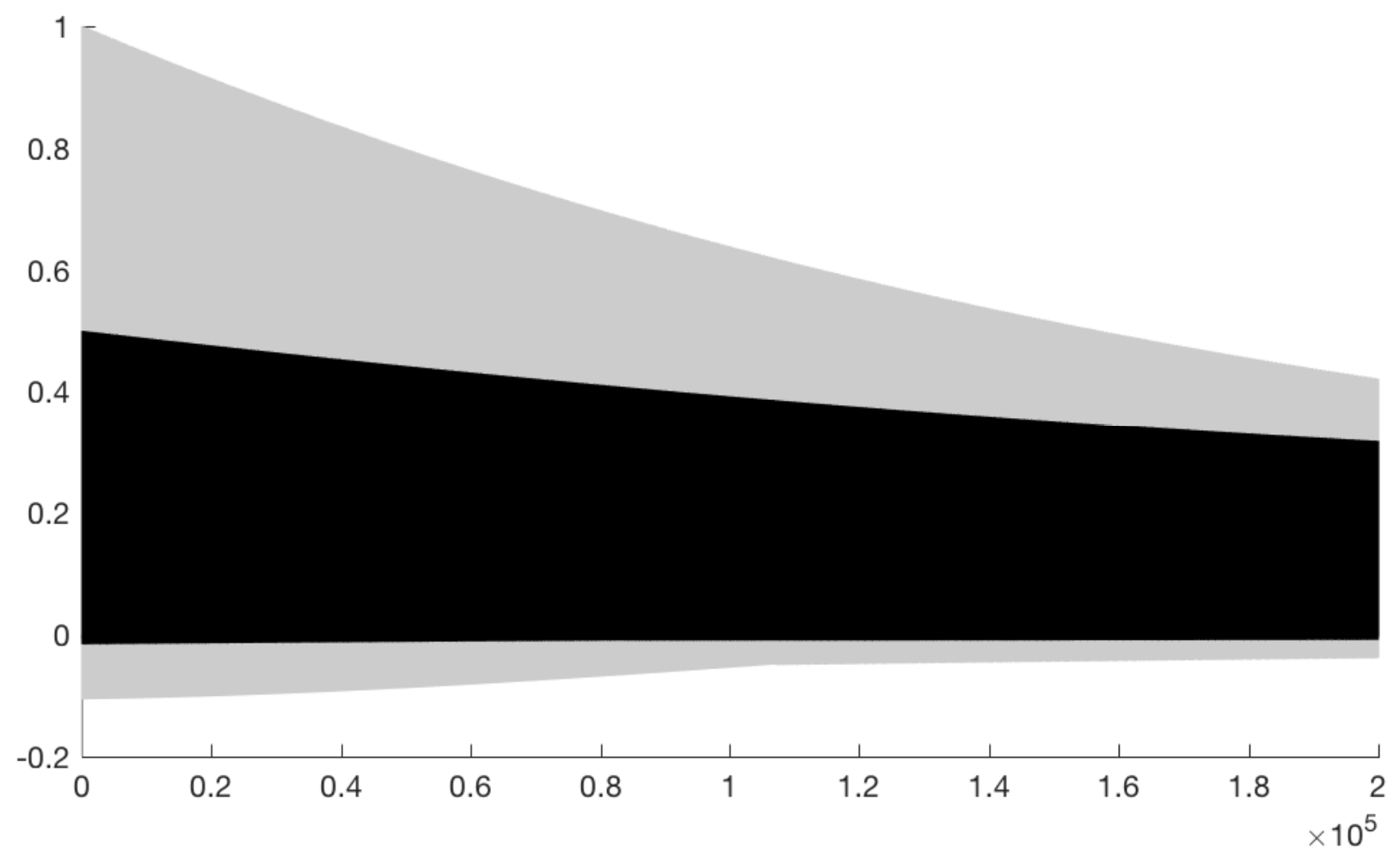}
    \end{center}
    \caption{Reversed binary directed tree with $L=15$ layers
      ($n=2^L-1=32767$ nodes). The time horizon is $T=2L+1=31$ in the first
      plot and $T=2\times 10^5$ in the second. Individual opinions are in
      gray, the average opinion is in black. As proved in
      Example~\ref{ex:binary-tree}, the graph is not pre-uniformly
      wise. Even if hard to illustrate, one can prove that the peaks of the
      average opinion (the black line in the figure) decrease
      asymptotically to zero, confirming the fact that this system is
      wise.}\label{fig:binary-tree}
  \end{figure}
\end{example}


These four examples complete the list of counterexamples in
Lemma~\ref{lemma:basic-implications}. We may make one obvious
additional statement: finite-time wisdom implies one-time wisdom. The
converse is not true, as established by the following counterexample.

\begin{example}[The weighted double-star graph is one-time wise, but not two-time wise]
  \label{ex:double-star}
  We define a sequence of weighted-neighbor matrices (recall
  Definition~\ref{def:equal-neighbor-models}) as follows. For each
  $n$, as illustrated in Figure~\ref{fig:double-star}, the graph is a
  double star with one root node (labeled $j$), $\sqrt{n}$
  intermediate nodes (a representative node is labeled $h$) and $n$
  leafs (a representative node is labeled $i$). For simplicity we
  assume $\sqrt{n}$ is a natural number. The symmetric edge weights
  are selected as follows: $1$ between root and the intermediate nodes
  and $1/\sqrt{n}$ between the intermediate nodes and the leafs.  Note
  that the total number of nodes is $n+\sqrt{n}+1 \asymp n$. We add a
  self-loop with unit weight at the root node $j$ so that each matrix
  in the sequence is primitive.
  \begin{figure}[H]
    \begin{center}
      \includegraphics[width=.6\linewidth]{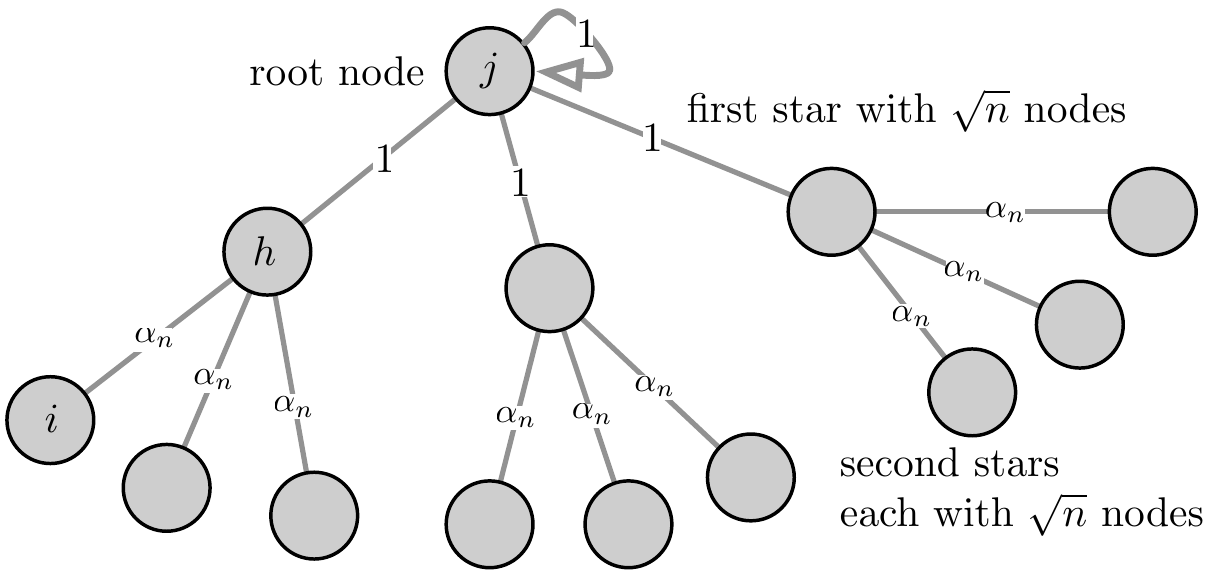}
    \end{center}
    \caption{A weighted double-star graph; we select $\alpha_n=1/\sqrt{n}$.}
    \label{fig:double-star}
  \end{figure}

  With these definitions, we have $W_{hj}=W_{jh}=1$ and, therefore,
  the weighted degree of the root note $j$ is $d_j=\sqrt{n}+1$. We
  also have $W_{ih}=W_{hi}=\alpha_n=1/\sqrt{n}$ and, therefore, the
  weighted degree of each intermediate node $h$ is
  $d_h=1+\sqrt{n}\alpha_n=2$ and the weighted degree of each leaf $i$
  is $d_i=\alpha_n=1/\sqrt{n}$.

  From the equality $P=\diag(W\vect{1}_n)^{-1}W$ defining a
  weighted-neighbor model, we compute $P_{ih}=d_i^{-1} W_{ih}=1$, and
  $P_{jh}= 1/(\sqrt{n}+1)$ so that the column sum of $P$ corresponding
  to an intermediate node $h$ is $\sqrt{n}\times 1 + 1\times
  1/(\sqrt{n}+1) \asymp \sqrt{n}$.  Similarly, we compute $P_{hj}=
  d_h^{-1} W_{hj}=1/2$ so that the column sum of $P$ corresponding to
  the root node $j$ is $\sqrt{n}/2+1/(\sqrt{n}+1)$.  Therefore, the
  sequence is one-time wise.
  
  Finally, for each leaf $i$, we have $(P^2)_{ij}= P_{ih}P_{hj} =
  1/2$. Since there are $n$ leafs, the column sum of~$P^2$
  corresponding to the root node $j$ is at least $n/2 \asymp n$. The
  matrix sequence is therefore not wise at time two.
  
The presence of one-time wisdom and the lack of two-time wisdom is
illustrated in Figure~\ref{fig:double-starSim}. \oprocend
  \begin{center}
    \begin{figure}[H]
      \centering
      \includegraphics[width=.6\linewidth]{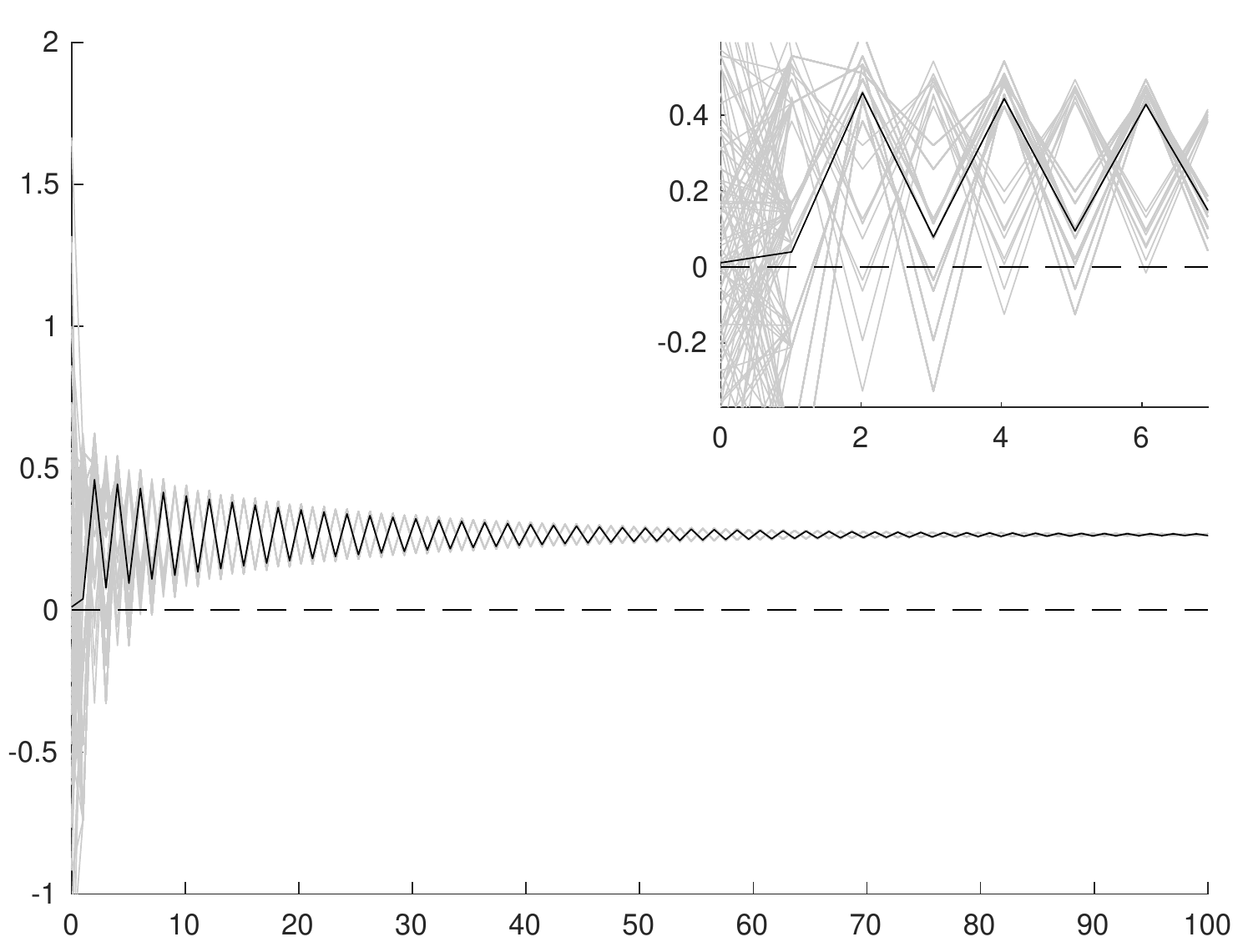}
      \caption{Weighted double star graph with $\sqrt{n}=10$ and therefore
        $n=111$ nodes. The time horizon is $T=100$. Individual opinions are
        in gray, the average opinion is in black.  As proved in
        Example~\ref{ex:double-star}, the two figures illustrate that the
        graph is one time wise (see upper right corner figure) but not two
        time wise (nor wise). Specifically, the theoretical
          analysis shows that the the error at time 1 is of order
          $1/\sqrt{n}$ and at time 2 of order $1$; indeed the black line
          illustrates an error magnitude that is much less than $1/10$ at
          time 1 but not at time 2.  }\label{fig:double-starSim}
    \end{figure}
  \end{center}   
\end{example}


\section{A sufficient condition for finite-time wisdom
in general sequences of stochastic matrices}
\label{sec:prominent-families}

In this section we provide an insightful sufficient condition
guaranteeing finite-time-wisdom for general sequences of matrices. The
main proof idea is to introduce a recursive bound on the total
influence of families of nodes. This bound amounts to a stronger
property than finite-time wisdom and, unlike what occurs for
finite-time wisdom, can be easily checked in terms of the matrix
sequence.

We start with the following useful definition.

\begin{definition}[Prominent families]
  Given a sequence of stochastic matrices of increasing dimensions
  $\{P^{[n]}\in\real^{n\times{n}}\}_{n\in\natural}$, a sequence of
  sets of nodes $\{\supscr{\BB}{n}\subset\until{n}\}_{n\in\natural}$
  is said to be a $P^{[n]}$-\emph{prominent family} if
  \begin{enumerate}
  \item its size is negligible, that is, $|\supscr{\BB}{n}|=o(n)$, but
  \item its \emph{total one-time influence} is order $1$, that is
    \begin{equation*}
      \frac{1}{n} \sum_{i=1}^n \sum_{j\in \supscr{\BB}{n}}
      \supscr{P}{n}_{ij} \asymp 1.
    \end{equation*}
  \end{enumerate}
\end{definition}

While one-time wisdom does not imply finite-time wisdom in general
sequences (see Example~\ref{ex:double-star}), the following key result
shows how the absence of prominent families is inherited by the powers
of $\enne{P}$.

\begin{theorem}[The absence of prominent families is persistent and implies
    finite-time wisdom] \label{theo:prominent} Consider a sequence of
  stochastic matrices of increasing dimensions
  $\{P^{[n]}\in\real^{n\times{n}}\}_{n\in\natural}$.  If there is no
  $\enne{P}$-prominent family, then
  \begin{enumerate}
  \item\label{fact:theo:1} for every $k$, there is no
    $(\enne{P})^k$-prominent family, and
  \item\label{fact:theo:2} $\enne{P}$ is finite-time wise.
  \end{enumerate}
\end{theorem}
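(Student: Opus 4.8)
The plan is to prove statement~\ref{fact:theo:1} by induction on $k$ and then obtain statement~\ref{fact:theo:2} as an immediate specialization. Throughout, for a matrix $Q$ and a family $\BB\subset\until{n}$, I write the \emph{total one-time influence} of $\BB$ relative to $Q$ as $\operatorname{Inf}_Q(\BB)=\frac{1}{n}\sum_{i=1}^n\sum_{j\in\BB}Q_{ij}=\sum_{j\in\BB}\big(\frac{1}{n}\sum_i Q_{ij}\big)$, i.e.\ the sum over $j\in\BB$ of the column averages of $Q$. With $\chi^{[n]}(k)=\frac{1}{n}\vect{1}_n^\top(\enne{P})^k$ as in~\eqref{def:flux}, this gives $\operatorname{Inf}_{(\enne{P})^k}(\BB)=\sum_{j\in\BB}\chi^{[n]}(k)_j$. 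The first step is to record the operative reformulation of the hypothesis: the absence of a $(\enne{P})^k$-prominent family is equivalent to the statement that \emph{every} sequence of families $\{\BB^{[n]}\}$ with $|\BB^{[n]}|=o(n)$ satisfies $\operatorname{Inf}_{(\enne{P})^k}(\BB^{[n]})\to0$; I would justify this equivalence by passing to the monotone rearrangement of the column averages (so that the maximal influence of a size-$m$ family is the sum of its $m$ largest column averages) together with a short subsequence argument. Statement~\ref{fact:theo:2} is then immediate: by Theorem~\ref{theo:wisdom}\ref{wc2} finite-time wisdom is the condition $\onenorm{\frac{1}{n}(\enne{P})^k}\to0$ for every $k$, and since $\onenorm{\frac{1}{n}(\enne{P})^k}=\max_j\chi^{[n]}(k)_j$ is the influence of the single heaviest column (a family of size $1=o(n)$), it vanishes once statement~\ref{fact:theo:1} is established.

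The heart of the argument is a recursive bound passing from time $k$ to time $k+1$. Fix a family $\BB^{[n]}$ with $|\BB^{[n]}|=o(n)$; the goal is $\operatorname{Inf}_{(\enne{P})^{k+1}}(\BB^{[n]})\to0$. Using $\chi^{[n]}(k+1)=\chi^{[n]}(k)\enne{P}$ and writing $r_\ell=\sum_{j\in\BB}\enne{P}_{\ell j}\in[0,1]$ for the row mass that node $\ell$ places on $\BB$, I get
\begin{equation*}
  \operatorname{Inf}_{(\enne{P})^{k+1}}(\BB)=\sum_{\ell=1}^n\chi^{[n]}(k)_\ell\, r_\ell.
\end{equation*}
I then split the nodes at a threshold $\theta>0$ into the \emph{heavy} set $A=\setdef{\ell}{r_\ell>\theta}$ and its complement. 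On the complement $r_\ell\le\theta$, and since $\sum_\ell\chi^{[n]}(k)_\ell=1$ by row-stochasticity this part contributes at most $\theta$; on $A$ I use $r_\ell\le1$, so that part is at most $\sum_{\ell\in A}\chi^{[n]}(k)_\ell=\operatorname{Inf}_{(\enne{P})^k}(A)$. Hence
\begin{equation*}
  \operatorname{Inf}_{(\enne{P})^{k+1}}(\BB)\le\operatorname{Inf}_{(\enne{P})^k}(A)+\theta.
\end{equation*}
The size of $A$ is controlled by a Markov-type count: $\theta\,|A|<\sum_{\ell\in A}r_\ell\le\sum_\ell r_\ell=\sum_{j\in\BB}\sum_\ell\enne{P}_{\ell j}=n\,\operatorname{Inf}_{\enne{P}}(\BB)$, so $|A|<n\,\operatorname{Inf}_{\enne{P}}(\BB)/\theta$.

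It remains to choose $\theta$. Because there is no $\enne{P}$-prominent family and $|\BB^{[n]}|=o(n)$, the one-time influence $\operatorname{Inf}_{\enne{P}}(\BB^{[n]})\to0$; assuming it positive (otherwise the relevant columns are null and $\operatorname{Inf}_{(\enne{P})^{k+1}}(\BB^{[n]})=0$ trivially), I set $\theta_n=\sqrt{\operatorname{Inf}_{\enne{P}}(\BB^{[n]})}\to0$. Then $\theta_n\to0$ disposes of the additive term, while $|A^{[n]}|<n\sqrt{\operatorname{Inf}_{\enne{P}}(\BB^{[n]})}=o(n)$, so $A^{[n]}$ is itself a negligible family. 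The inductive hypothesis (no $(\enne{P})^k$-prominent family, in the reformulated sense) then yields $\operatorname{Inf}_{(\enne{P})^k}(A^{[n]})\to0$, whence $\operatorname{Inf}_{(\enne{P})^{k+1}}(\BB^{[n]})\le\operatorname{Inf}_{(\enne{P})^k}(A^{[n]})+\theta_n\to0$. As $\BB^{[n]}$ was an arbitrary negligible family, no $(\enne{P})^{k+1}$-prominent family exists, closing the induction.

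The step I expect to be the main obstacle is the control of the heavy set $A$: to invoke the inductive hypothesis I must guarantee $|A^{[n]}|=o(n)$, and this is exactly where the balancing choice of $\theta_n$ is forced --- too small a $\theta$ makes $A$ too large to be negligible, while too large a $\theta$ fails to make the additive error vanish. The clean resolution is that the one-time influence of $\BB$ \emph{itself} tends to zero (the $k=1$ hypothesis), leaving room to slot $\theta_n$ strictly between $\operatorname{Inf}_{\enne{P}}(\BB^{[n]})$ and $1$. A secondary technical point is the equivalence asserted in the first paragraph between the stated ``order~$1$'' notion of prominence and the ``every negligible family has vanishing influence'' formulation used in the induction; this needs the monotone-rearrangement observation and a subsequence argument, but no further idea.
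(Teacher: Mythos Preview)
Your proposal is correct and follows essentially the same approach as the paper: the paper introduces the function $\Phi_P(s)=\max_{|\BB|=s}\sum_{i}\sum_{j\in\BB}P_{ij}$, proves the recursive bound $\Phi_{PQ}(s)\le\Phi_P(\delta\Phi_Q(s))+n/\delta$ via the identical heavy/light split and Markov-type size estimate you use, and closes the induction by letting $\delta\to\infty$. The only cosmetic differences are that the paper factors $(\enne{P})^{k+1}=\enne{P}\cdot(\enne{P})^k$ rather than $(\enne{P})^k\cdot\enne{P}$, and leaves the threshold $\delta$ free until the end instead of fixing your concrete choice $\theta_n=\sqrt{\operatorname{Inf}_{\enne{P}}(\BB^{[n]})}$; neither changes the substance of the argument.
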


In order to prove Theorem~\ref{theo:prominent} we provide some general
notions and then establish the recursive influence bound.  Given an
$n\times n$ stochastic matrix $P$, define the \emph{maximum one-time
  influence of a set of nodes}
$\map{\Phi_P}{\{0,\dots,n\}}{\realnonnegative}$ by
\begin{equation*} 
\begin{cases}
  \Phi_P(0)=0, \\
  \ds \Phi_P(s)=\max\limits_{\BB\subseteq \until{n}, |\BB|=s} \sum\limits_{i=1}^n\sum\limits_{j\in \BB}P_{ij},\quad & s\in\fromto{1}{n}.
\end{cases}
\end{equation*}
Clearly, $\Phi_P(s)$ is non-decreasing in $s$. Notice moreover that
\begin{equation*}
  \begin{split}
    \Phi_P(1)&= \max\limits_{i\in\until{n}}\sum\limits_{i=1}^nP_{ij}=\|P\|_1, \\
    \Phi_P(n)&= \sum_{i=1}^n\sum\limits_{j=1}^nP_{ij}=n.
  \end{split}
\end{equation*}
We can extend the definition of $\Phi_P$ and define
$\map{\Phi_P}{\realnonnegative}{\realnonnegative}$ by
$\Phi_P(x)=\Phi_P(\lfloor x \rfloor).$ Note that $\Phi_P$ remains a
non-decreasing function.

The following two statements are a straightforward consequence of the definition of the function $\Phi_P$.
\begin{remark}\label{remark:prominent}
\begin{enumerate}
  \item\label{fact:prom:1} The absence of $\enne{P}$-prominent
    families is equivalent to the following fact:
    \begin{gather}\label{no_one_prominent}
      \text{for all } \{a_n\in\realnonnegative\}_{n\in\natural} \text{ satisfying } a_n=o(n), \Phi_{\enne{P}}(a_n)=o(n).
    \end{gather}
  \item\label{fact:prom:2} Since
    $n^{-1}\|\enne{P}\|_1=n^{-1}\Phi_{\enne{P}}(1)$, the previous
    statement and Theorem \ref{theo:wisdom} together imply that, if
    there is no $\enne{P}$-prominent family, then $\enne{P}$ is
    one-time wise. \oprocend
\end{enumerate} 
\end{remark}

The following technical result will play a crucial role in our derivations.

\begin{lemma}[Recursive influence bound]\label{lemma:Phi}
  Let $P,Q$ be $n\times n$ stochastic matrices. For every $\delta>0$
  and $s\in\fromto{0}{n}$, it holds
  $$\Phi_{PQ}(s)\leq\Phi_P\left(\delta\Phi_Q(s)\right)+\frac{n}{\delta}.$$
\end{lemma}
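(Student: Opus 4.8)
The plan is to expand the matrix product and decouple the contribution of $P$ from that of $Q$ by interchanging the order of summation. Fix any set $\BB\subseteq\until{n}$ with $|\BB|=s$. Writing $(PQ)_{ij}=\sum_\ell P_{i\ell}Q_{\ell j}$ and swapping sums gives
$$\sum_{i=1}^n\sum_{j\in\BB}(PQ)_{ij}=\sum_{\ell=1}^n c_\ell\, r_\ell,\qquad c_\ell:=\sum_{i=1}^nP_{i\ell},\quad r_\ell:=\sum_{j\in\BB}Q_{\ell j},$$
so that $c_\ell$ is the $\ell$-th column sum of $P$ and $r_\ell$ is the total weight row $\ell$ of $Q$ places on $\BB$. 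I would then record the three structural facts that drive the estimate: since $P$ is stochastic, $c_\ell\ge0$ and $\sum_\ell c_\ell=n$; since $Q$ is stochastic, $0\le r_\ell\le1$; and $\sum_\ell r_\ell=\sum_{j\in\BB}\sum_\ell Q_{\ell j}$ is exactly the one-time influence of $\BB$ under $Q$, hence at most $\Phi_Q(s)$.

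Next I would split the sum $\sum_\ell c_\ell r_\ell$ at a threshold on $r_\ell$ chosen precisely to produce the two terms in the claim. Set $L=\{\ell:r_\ell>1/\delta\}$. A counting (Markov-type) argument gives $\tfrac1\delta|L|<\sum_{\ell\in L}r_\ell\le\sum_\ell r_\ell\le\Phi_Q(s)$, hence $|L|\le\delta\Phi_Q(s)$. On $L$ I bound $r_\ell\le1$, so $\sum_{\ell\in L}c_\ell r_\ell\le\sum_{\ell\in L}c_\ell$; but $\sum_{\ell\in L}c_\ell$ is the influence of the column set $L$ under $P$, which by definition is at most $\Phi_P(|L|)$, and monotonicity of $\Phi_P$ together with $|L|\le\delta\Phi_Q(s)$ yields $\Phi_P(|L|)\le\Phi_P(\delta\Phi_Q(s))$. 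On the complement I bound $r_\ell\le1/\delta$, so $\sum_{\ell\notin L}c_\ell r_\ell\le\tfrac1\delta\sum_\ell c_\ell=n/\delta$. Adding the two pieces and maximizing over $\BB$ delivers $\Phi_{PQ}(s)\le\Phi_P(\delta\Phi_Q(s))+n/\delta$.

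The only delicate point is the domain of $\Phi_P$: the argument $\delta\Phi_Q(s)$ can exceed $n$, outside the range on which $\Phi_P$ was originally defined. I would dispose of this at the outset by the trivial observation that $PQ$ is again stochastic, so $\Phi_{PQ}(s)\le\Phi_{PQ}(n)=n$; thus the claimed inequality holds automatically (reading $\Phi_P$ as capped at $\Phi_P(n)=n$) whenever $\delta\Phi_Q(s)\ge n$, and in the complementary case $\delta\Phi_Q(s)<n$ every quantity lies in the admissible range $[0,n]$ and the monotonicity step is literally valid. Everything else is the routine swap-and-threshold computation; the conceptual content is entirely in the decoupling identity $\sum_\ell c_\ell r_\ell$ and in recognizing that the single threshold $1/\delta$ simultaneously controls the cardinality of the ``heavy-row'' set $L$ through $\Phi_Q$ and the residual mass through the normalization $\sum_\ell c_\ell=n$.
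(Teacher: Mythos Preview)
Your proof is correct and follows essentially the same route as the paper: both split the intermediate index by the threshold $r_\ell\gtrless 1/\delta$ (the paper calls your set $L$ by the name $N_\BB^Q(\delta)$), bound the heavy part via $\Phi_P(|L|)$ using $r_\ell\le1$, bound the light part by $n/\delta$ using $\sum_\ell c_\ell=n$, and control $|L|$ by the Markov-type estimate $|L|\le\delta\Phi_Q(s)$. Your treatment is slightly tidier in that you make the decoupling identity $\sum_\ell c_\ell r_\ell$ explicit and you address the domain issue when $\delta\Phi_Q(s)>n$, which the paper glosses over.
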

\begin{proof}
  Define the shorthand $\mc V=\until{n}$.  Consider any $\BB\subseteq
  \mc V$ and define $$N_\BB^Q(\delta)= \bigsetdef{h\in\mc
    V}{\sum\nolimits_{j\in \BB}Q_{hj}\geq\delta^{-1}}.$$ Notice that
\begin{equation}
\begin{aligned}
  \sum_{i\in\mc V}\sum_{j\in \BB}(PQ)_{ij}
  &=\sum_{i\in\mc V}\sum_{h\in\mc V}\sum_{j\in \BB}P_{ih}Q_{hj}\\
&=\sum_{i\in\mc V}\sum_{h\in N_\BB^Q(\delta)}P_{ih}\sum_{j\in \BB}Q_{hj}+ \sum_{i\in\mc V}\sum_{h\notin N_\BB^Q(\delta)}P_{ih}\sum_{j\in \BB}Q_{hj} \\
&\leq \sum_{i\in\mc V}\sum_{h\in N_\BB^Q(\delta)}P_{ih}+\sum_{i\in\mc V}\sum_{h\notin N_\BB^Q(\delta)}P_{ih}\delta^{-1}\\
&\leq \Phi_P\left(|N_\BB^Q(\delta)|\right)+\frac{n}{\delta},
\end{aligned}
\end{equation}
where last inequality follows from the definition of $\Phi_P$ and the
trivial bound $\sum_{i\in\mc V}\sum_{h\notin
  N_\BB^Q(\delta)}P_{ih}\leq 1$.  Note that
$$\Phi_Q(|\BB|)\geq \sum_{h\in\mc V}\sum_{j\in \BB} Q_{hj}\geq \sum_{h\in N_\BB^Q(\delta)}\sum_{j\in \BB} Q_{hj}\geq |N_\BB^Q(\delta)|\cdot\delta^{-1},$$
which implies
$$|N_\BB^Q(\delta)|\leq \delta\Phi_Q(|\BB|).$$
Therefore,
$$\sum_{i\in\mc V}\sum_{j\in
  \BB}(PQ)_{ij}\leq\Phi_P\left(\delta\Phi_Q(|\BB|)\right)+\frac{n}{\delta}.$$
Now, fix a size $s$ and the proof is complete by computing the maximum
value of the left and right-hand side over all subsets
$\BB\subseteq\mc V$ with $|\BB|=s$.
\end{proof}

We are finally ready to prove the main result in this section.

\begin{proof}[Proof of Theorem~\ref{theo:prominent}]
  We start by proving statement~\ref{fact:theo:1}.  Given
  Remark~\ref{remark:prominent}\ref{fact:prom:1}, it suffices to to
  show that
  \begin{equation} \label{ass:induction:an}
    \text{for all } \{a_n\in\realnonnegative\}_{n\in\natural} \text{ satisfying } a_n=o(n), \enspace \Phi_{(\enne{P})^k}(a_n)=o(n).
  \end{equation}
  We proceed by induction on $k$. Indeed we know from our assumption
  that the statement in equation~\eqref{ass:induction:an} is true for
  $k=1$. We now suppose it is true for $k$ and we prove it for
  $k+1$. We fix a sequence $\{a_n\}$ such that $a_n=o(n)$.
  Lemma~\ref{lemma:Phi} implies
  $$\Phi_{(\enne{P})^{k+1}}(a_n)\leq\Phi_{\enne{P}}\left(\delta\Phi_{(\enne{P})^k}(a_n)\right)+\frac{n}{\delta}.$$
  The induction assumption implies that
  $\Phi_{(\enne{P})^k}(a_n)=o(n)$ and, by
  Remark~\ref{remark:prominent}\ref{fact:prom:1}, we have that
  $$\Phi_{\enne{P}}\left(\delta\Phi_{(\enne{P})^k}(a_n)\right)=o(n).$$
  Therefore
  $$\limsup_{n\to\infty}\frac{\Phi_{(\enne{P})^{k+1})}(a_n)}{n}\leq\frac{1}{\delta}$$
  and, because $\delta$ is arbitrary,
  $$\limsup_{n\to\infty}\frac{\Phi_{(\enne{P})^{k+1})}(a_n)}{n}=0.$$
  This equality completes the proof by induction and proves
  statement~\ref{fact:theo:1}.  Statement~\ref{fact:theo:2} follow
  from statement~\ref{fact:theo:1} considering that
$$\frac{1}{n}\|(\enne{P})^k\|_1=\frac{1}{n}\Phi_{(\enne{P})^k}(1).$$
\end{proof}

\section{A necessary and sufficient condition for pre-uniform
  wisdom in sequences of equal-neighbor matrices}
\label{sec:iff-equal-neighbor}

In this section we focus on sequence of equal-neighbor stochastic
matrices (recall Definition~\ref{def:equal-neighbor-models}).  For
this setting we are able to provide a complete characterization of
one-time, finite-time and pre-uniform wisdom.  We start with a revised
notion of prominence.

\begin{definition}[Prominent individuals]
  Given a sequence of stochastic matrices of increasing dimensions
  $\{P^{[n]}\in\real^{n\times{n}}\}_{n\in\natural}$, a sequence of
  individuals $\{\supscr{k}{n}\in\until{n}\}_{n\in\natural}$ is said
  to be $P^{[n]}$-\emph{prominent} if its \emph{total one-time
    influence} is order $1$, that is
  \begin{equation*}
    \frac{1}{n} \sum_{i=1}^n \supscr{P}{n}_{i \supscr{k}{n}} \asymp 1.
  \end{equation*}
\end{definition}
In other words, a prominent individual is a prominent family composed
of a single individual at each $n$.

\begin{remark} \label{remark:prominent-ind} Some comments are in order. \begin{enumerate}
  \item The absence of prominent families implies the absence of
    prominent individuals; but that the absence of prominent
    individual does not imply the absence of prominent family (e.g.,
    take $\sqrt{n}$ nodes each with $\sqrt{n}$ neighbors with degree
    $1$ in an equal neighbor sequence).
\item The existence of prominent individuals means that there exists a
  sequence of nodes $\{\supscr{k}{n}\in\until{n}\}_{n\in\natural}$
  such that node $\supscr{k}{n}$ possesses order~$n$ neighbors with
  order~$1$ degree.\oprocend
  \end{enumerate}
\end{remark}

We next show how the absence of prominent individuals and the notion
of one-time wisdom play a key role in equal-neighbor sequences.

\begin{theorem}[The absence of prominent individuals is necessary
    and sufficient for pre-uniform wisdom in equal-neighbor
    sequences] \label{thm:prominent-equal} Consider a sequence of
  equal-neighbor matrices of increasing dimensions
  $\{P^{[n]}\}_{n\in\natural}$.  The following statements are
  equivalent:
  \begin{enumerate}
  \item\label{fact:pe:prominent-individual} there is no
    $P^{[n]}$-prominent individual,
  \item\label{fact:pe:one-time-wise} the sequence is one-time wise, and
  \item\label{fact:pe:uniformly-wise} the sequence is pre-uniformly
    wise.
  \end{enumerate}
\end{theorem}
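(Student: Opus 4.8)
The plan is to establish the cycle of implications, treating \ref{fact:pe:uniformly-wise}$\Rightarrow$\ref{fact:pe:one-time-wise}$\Leftrightarrow$\ref{fact:pe:prominent-individual} first (these hold for arbitrary stochastic sequences) and then concentrating on the substantive implication \ref{fact:pe:one-time-wise}$\Rightarrow$\ref{fact:pe:uniformly-wise}, where the equal-neighbor structure is essential. The implication \ref{fact:pe:uniformly-wise}$\Rightarrow$\ref{fact:pe:one-time-wise} is immediate: taking $k=1$ in Definition~\ref{def:pre-uniform-wisdom} forces $\lim_n\frac1n\|P^{[n]}\|_1=0$, which is one-time wisdom by Theorem~\ref{theo:wisdom}\ref{wc1}. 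For \ref{fact:pe:prominent-individual}$\Leftrightarrow$\ref{fact:pe:one-time-wise} I would note that the maximal column average equals the total one-time influence of the most influential individual: choosing for each $n$ an index $k^{[n]}$ that maximizes $\frac1n\sum_iP^{[n]}_{ij}$ gives $\frac1n\sum_iP^{[n]}_{ik^{[n]}}=\frac1n\|P^{[n]}\|_1$. Hence a prominent individual exists precisely along a subsequence on which this maximum stays bounded away from zero, i.e. precisely when $\frac1n\|P^{[n]}\|_1\not\to0$, which by Theorem~\ref{theo:wisdom}\ref{wc1} is the failure of one-time wisdom.

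For the core implication \ref{fact:pe:one-time-wise}$\Rightarrow$\ref{fact:pe:uniformly-wise}, fix $n$ and write $d_i$ for the degrees, $D=\sum_id_i$, $\pi_i=d_i/D$ as in \eqref{eq:pi-equal-neighbor}, $g_i=1/d_i$, and $v_j^{(k)}=\sum_i(P^k)_{ij}$ for the column sums, so that $\|P^k\|_1=\max_jv_j^{(k)}$. Because $W=W^\top$, the matrix $P$ is reversible, $\pi_iP_{ij}=W_{ij}/D=\pi_jP_{ji}$, so the symmetrization $S:=\Pi^{1/2}P\Pi^{-1/2}$ (with $\Pi=\diag(\pi)$) is symmetric with $\|S\|_2\le1$ and satisfies $(P^{2k})_{jj}=(S^{2k})_{jj}=\|S^ke_j\|_2^2$. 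The first key step is a Cauchy--Schwarz bound on the column sum: writing $(P^k)_{ij}=\pi_i^{-1/2}(S^k)_{ij}\pi_j^{1/2}$ and applying Cauchy--Schwarz against the weights $\pi_i^{-1/2}$ gives
\[
v_j^{(k)}=\pi_j^{1/2}\sum_i\pi_i^{-1/2}(S^k)_{ij}\le\Big(\frac{d_j}{D}\Big)^{1/2}\Big(\sum_i\frac{D}{d_i}\Big)^{1/2}(P^{2k})_{jj}^{1/2}=\big(d_j\|g\|_1\big)^{1/2}(P^{2k})_{jj}^{1/2}.
\]

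The second key step controls the return probability uniformly in $k$. Since $\|S\|_2\le1$, the quantity $(P^{2k})_{jj}=\|S^ke_j\|_2^2$ is non-increasing in $k$, so for every $k\ge1$ it is dominated by its value at $k=1$, and a direct computation using that $W$ is binary and symmetric gives $(P^2)_{jj}=\frac1{d_j}\sum_{h\in N_j}\frac1{d_h}=v_j^{(1)}/d_j$. Substituting, for all $k\ge1$ and all $j$,
\[
v_j^{(k)}\le\big(d_j\|g\|_1\big)^{1/2}\Big(\frac{v_j^{(1)}}{d_j}\Big)^{1/2}=\big(\|g\|_1\,v_j^{(1)}\big)^{1/2}\le\big(n\,\|P\|_1\big)^{1/2},
\]
using $\|g\|_1\le n$ and $v_j^{(1)}\le\|P\|_1$. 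Taking the maximum over $j$ and the supremum over $k$ yields $\sup_k\|(P^{[n]})^k\|_1\le(n\|P^{[n]}\|_1)^{1/2}$, whence $\frac1n\sup_k\|(P^{[n]})^k\|_1\le(\|P^{[n]}\|_1/n)^{1/2}$, which tends to $0$ exactly when the sequence is one-time wise. This proves \ref{fact:pe:one-time-wise}$\Rightarrow$\ref{fact:pe:uniformly-wise} and closes the cycle.

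The main obstacle is the uniform-in-$k$ estimate: one-time wisdom controls only the single matrix $P^{[n]}$, whereas pre-uniform wisdom demands a bound holding simultaneously for every power. The resolution is the identity $(P^2)_{jj}=v_j^{(1)}/d_j$, which ties the diagonal return probability precisely to the one-time column influence, together with the monotonicity of $(P^{2k})_{jj}$ afforded by reversibility; Cauchy--Schwarz then converts this one-step information into a bound on all powers. I would double-check two structural points that make the argument special to this setting: that equal-neighbor matrices are genuinely reversible with the stated $\pi$ (this uses only $W=W^\top$ and is independent of connectivity), and that the binary nature of $W$ is what collapses $(P^2)_{jj}$ to $v_j^{(1)}/d_j$ (the weighted-neighbor case would instead produce $\frac1{d_j}\sum_hW_{jh}^2/d_h$, which is why the clean characterization holds in the equal-neighbor case but need not extend verbatim).
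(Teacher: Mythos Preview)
Your proof is correct and, for the key implication \ref{fact:pe:one-time-wise}$\Rightarrow$\ref{fact:pe:uniformly-wise}, takes a genuinely different route from the paper. The paper proves the uniform bound (Lemma~\ref{lemma:uniform-onenorm-bound}) by a combinatorial threshold argument: for each column $j$ it splits the neighbors of $j$ according to whether their degree is below or above a cutoff $\delta$, establishes the pathwise estimate $\sum_i P^{k-1}_{ih}\le d_h$ by expanding the product and using $W=W^\top$, and then optimizes over $\delta$ to obtain $\frac1n\|P^k\|_1\le 2\big(\frac1n\|P\|_1\big)^{1/2}$. You instead exploit reversibility directly: the symmetrization $S=\Pi^{1/2}P\Pi^{-1/2}$ has operator norm $1$, so the return probability $(P^{2k})_{jj}=\|S^ke_j\|_2^2$ is monotone in $k$; combined with the identity $(P^2)_{jj}=v_j^{(1)}/d_j$ (which uses that $W$ is binary) and a Cauchy--Schwarz step, this yields $\frac1n\|P^k\|_1\le\big(\frac1n\|P\|_1\big)^{1/2}$, in fact with a sharper constant. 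Your argument is cleaner and makes explicit which structural features are doing the work (reversibility for the monotonicity, binarity for collapsing $(P^2)_{jj}$), whereas the paper's threshold computation is more hands-on but perhaps more adaptable to variants where reversibility is absent. For the equivalence \ref{fact:pe:prominent-individual}$\Leftrightarrow$\ref{fact:pe:one-time-wise} your one-line observation that the maximal column average \emph{is} the maximal one-time individual influence is more direct than the paper's separate treatment of the two directions via the sets $N_j(\delta)$.
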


In other words, one-time wisdom implies finite-time wisdom and
pre-uniform wisdom for the setting of equal-neighbor sequences as well
as wisdom for the setting of primitive equal-neighbor
sequences. Recall that for more general sequences, e.g., the setting
of weighted-neighbor models, one-time wisdom does not imply two-time
wisdom (see Example~\ref{ex:double-star}).

Based on Theorem~\ref{thm:prominent-equal} (and specifically on the
fact that one-time wisdom implies wisdom) and
Example~\ref{ex:complete+star} (showing an example of a wise but not
one-time wise sequence), we classify the equal-neighbor sequences as
shown in Figure~\ref{fig:equal-neighbor-sets}.

\begin{figure*}[hbt]
  \begin{center}
    \includegraphics[width=\linewidth]{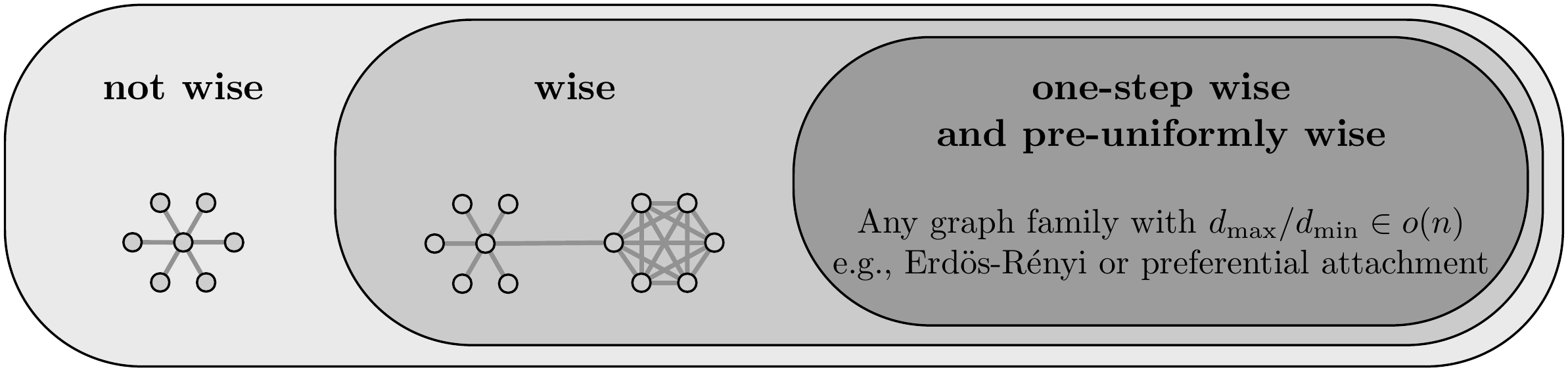}
  \end{center}
  \caption{Logical relations among sets of equal-neighbor sequences
    with varying degree of wisdom. For each set we provide an example
    of a equal-neighbor sequence.}
  \label{fig:equal-neighbor-sets}
\end{figure*}

In order to prove Theorem~\ref{thm:prominent-equal} we establish some
useful bounds.

\begin{lemma}[One-norm bounds]
  \label{lemma:uniform-onenorm-bound}
  Any equal-neighbor matrix $P\in\real^{n\times{n}}$ satisfies, for
  all $k\in\natural$,
  \begin{align}
    \bigonenorm{\tfrac{1}{n} P^k}  &\leq 2 \bigonenorm{\tfrac{1}{n} P}^{1/2}.  \label{bound:main} 
  \end{align}
\end{lemma}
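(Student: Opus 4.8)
The plan is to exploit the special structure of equal-neighbor matrices, where $P$ is symmetric when conjugated by the degree matrix. The key observation is that for an equal-neighbor matrix, $P = D^{-1}W$ with $W$ symmetric and $D = \diag(W\vectorones[n])$. Therefore $D^{1/2}PD^{-1/2} = D^{-1/2}WD^{-1/2}$ is a symmetric matrix, which means $P$ is \emph{reversible} with respect to the stationary measure $\pi$ given by~\eqref{eq:pi-equal-neighbor}. Reversibility gives us a handle on the entries of $P^k$ through a Cauchy-Schwarz-type inequality relating entries at time $k$ to entries at time $1$.

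First I would write out what $\bigonenorm{\frac{1}{n}P^k}$ means explicitly: it is $\frac{1}{n}\max_j \sum_i (P^k)_{ij}$. The reversibility relation $\pi_i (P^k)_{ij} = \pi_j (P^k)_{ji}$ lets me convert column sums into row-weighted expressions. Specifically, $\sum_i (P^k)_{ij} = \sum_i \frac{\pi_j}{\pi_i}(P^k)_{ji} = \pi_j \sum_i \frac{1}{\pi_i}(P^k)_{ji}$. Since $\pi_i = d_i/(d_1+\dots+d_n)$, we have $1/\pi_i = (d_1+\dots+d_n)/d_i$, and in the equal-neighbor case each nonzero entry of $W$ equals the same constant so the $d_i$ are (proportional to) integer degrees. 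The aim is to bound $\sum_i (P^k)_{ij}$ by something involving $\sum_i (P)_{ij}$ raised to the power $1/2$.

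The main technical step is a Cauchy-Schwarz / spectral argument: writing $(P^k)_{ij}$ in terms of the symmetric matrix $S = D^{1/2}PD^{-1/2}$, we have $(P^k)_{ij} = \sqrt{\pi_i/\pi_j}\,(S^k)_{ij}$ where $S$ is symmetric with the same eigenvalues as $P$, all in $[-1,1]$. For a symmetric contraction, the diagonal entries satisfy $(S^{2k})_{jj} \le (S^2)_{jj}^{?}$-type monotonicity, but more directly $(S^{k})_{ij}$ can be controlled via $(S^{2k})_{jj} = \sum_i (S^k)_{ij}^2$ and the fact that $\|S^k e_j\|_2 \le \|S e_j\|_2^{?}$. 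The cleanest route is: bound $\sum_i (P^k)_{ij}$ using Cauchy-Schwarz against the measure $\pi$, reducing the column sum of $P^k$ to a return-probability-like quantity $(P^{2k})_{jj}$ or to $\bigonenorm{\frac{1}{n}P}$ via a one-step comparison. I expect the decisive inequality to come from bounding the return probability $(P^{2})_{jj} \le \bigonenorm{\frac{1}{n}P}$-type estimates together with the contraction property $(P^{2k})_{jj}\le (P^2)_{jj}$ that holds for reversible chains.

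The hard part will be assembling these pieces so that the exponent comes out as exactly $1/2$ and the constant as exactly $2$, rather than something weaker. The factor $1/2$ strongly suggests a single application of Cauchy-Schwarz splitting the $k$-step walk into two halves and controlling each half by the one-step quantity; the factor $2$ likely absorbs the gap between counting with and without the degree weighting (or the self-loop correction needed to make eigenvalues nonnegative). I would therefore structure the proof by (i) establishing reversibility and the symmetric conjugate $S$, (ii) proving $(P^{2k})_{jj} \le (P^2)_{jj}$ for reversible chains via the spectral decomposition, and (iii) combining with Cauchy-Schwarz to transfer the column-sum bound at time $k$ to the column-sum bound at time $1$, being careful that the constant and the square-root exponent track correctly through the degree weights.
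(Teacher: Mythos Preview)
Your spectral/reversibility approach is viable and genuinely different from the paper's argument. The paper does \emph{not} use Cauchy--Schwarz or the symmetric conjugate $S=D^{1/2}PD^{-1/2}$; instead it performs a one-step threshold splitting. For a fixed column $j$ and a free parameter $\delta>0$, it separates the neighbors of $j$ into those with degree $\le\delta$ and those with degree $>\delta$, writes $\gamma_j^{(k)}=\sum_i(P^k)_{ij}=\sum_i\sum_h P^{k-1}_{ih}P_{hj}$, and bounds the ``large-degree'' part by $n/\delta$. For the ``small-degree'' part it proves the elementary bound $\sum_i P^{k-1}_{ih}\le d_h$ (essentially your reversibility identity, established by hand via the symmetry of $W$), yielding $\gamma_j^{(k)}\le \delta\,\gamma_j^{(1)}+n/\delta$. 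Optimizing over $\delta$ gives the square root and the constant~$2$.

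Your route, once made precise, is arguably cleaner: writing $\gamma_j^{(k)}=\sqrt{d_j}\sum_i d_i^{-1/2}(S^k)_{ij}$, applying Cauchy--Schwarz, and then using the spectral monotonicity $(S^{2k})_{jj}\le(S^2)_{jj}$ together with the identity $(S^2)_{jj}=(P^2)_{jj}=\gamma_j^{(1)}/d_j$ (here the binary nature of $W$ is used via $W_{ij}^2=W_{ij}$), one obtains $\gamma_j^{(k)}\le\sqrt{\sum_i 1/d_i}\,\sqrt{\gamma_j^{(1)}}\le\sqrt{n\,\gamma_j^{(1)}}$, i.e.\ constant~$1$ rather than~$2$. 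So your instinct about the exponent $1/2$ coming from a single Cauchy--Schwarz is correct, and your speculation that the factor~$2$ is forced is not: the paper's $2$ is an artifact of the optimization step, not something intrinsic. The trade-off is that the paper's argument is entirely elementary (no spectral theory) and its threshold technique is the same device reused in Lemma~\ref{lemma:Phi}, whereas your argument leans on reversibility and would not generalize to the \emph{directed} equal-neighbor case. Minor slip: in your proposal the relation should read $(P^k)_{ij}=\sqrt{\pi_j/\pi_i}\,(S^k)_{ij}$, not $\sqrt{\pi_i/\pi_j}$.
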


\begin{remark}
 The square root is unavoidable, i.e., it is not possible to obtain a
 bound similar to~\eqref{bound:main} for equal-neighbor matrices that
 does not involve the square root of the one-norm of $P$. For example,
 the double-star with $\sqrt{n}$ aperture is an example where
 $\onenorm{\tfrac{1}{n} P} = 1/n$ and $\onenorm{\tfrac{1}{n} P^2} =
 1/\sqrt{n}$. \oprocend
\end{remark}

\begin{proof}[Proof of Lemma~\ref{lemma:uniform-onenorm-bound}]
 Let $P=\diag(W\vect{1}_n)^{-1}W$ for a binary and symmetric $W$, let
 $G$ denote the undirected graph defined by $W$, pick a node
 $j\in\until{n}$ and a constant $\delta>0$. Define
  \begin{equation}
    \label{def:Njdelta}
    N_j(\delta) = \Bigsetdef{h\in\until{n}}{P_{hj}\geq \frac{1}{\delta}} = \setdef{h\in N_j}{d_h\leq\delta}.
  \end{equation}
  We now compute
  \begin{align*}
    P^k_{ij} &= \sum_{h=1}^n P^{k-1}_{ih}P_{hj} =
    \sum_{h\in N_j(\delta)} P^{k-1}_{ih}P_{hj} +
    \sum_{h\notin N_j(\delta)} P^{k-1}_{ih}P_{hj} \\
    &\leq  \sum_{h\in N_j(\delta)} P^{k-1}_{ih}P_{hj} + \sum_{h\notin N_j(\delta)} P^{k-1}_{ih}\frac{1}{\delta}
    \\
    &=\sum_{h\in N_j(\delta)} P^{k-1}_{ih}P_{hj} + \frac{1}{\delta}.
  \end{align*}
  We now introduce the shorthand $\gamma_j^{(k)}=\sum_{i=1}^n
  P^k_{ij}$ for the $j$-th column sum of $P^k$ and obtain
  \begin{align}
    \label{eq:tmp-gamma}
    \gamma_j^{(k)} = \sum_{i=1}^n P^k_{ij} &\leq \sum_{i=1}^n \sum_{h\in N_j(\delta)} P^{k-1}_{ih}P_{hj} + \frac{n}{\delta}.
  \end{align}
  We change the order of summation and compute, for any $h\in
  N_j(\delta)$,
  \begin{align*}
    \sum_{i=1}^n P^{k-1}_{ih} 
    &=
    \sum_{i=1}^n  \sum_{\ell_1=1}^n \cdots \sum_{\ell_{k-2}=1}^n 
    (d_i^{-1}W_{i\ell_1})   (d_{\ell_1}^{-1}W_{\ell_1\ell_2})
    \cdots
    (d_{\ell_{k-2}}^{-1}W_{\ell_{k-2}h}) \\
    &=    \sum_{i=1}^n \sum_{\ell_1=1}^n  \cdots  \sum_{\ell_{k-2}=1}^n d_i^{-1} 
    (W_{\ell_1i} d_{\ell_1}^{-1}) (W_{\ell_2\ell_1} d_{\ell_2}^{-1})
    \cdots
    (W_{\ell_{k-2}\ell_{k-3}} d_{\ell_{k-2}}^{-1}) W_{h\ell_{k-2}},
  \end{align*}
  where we use the symmetry of $W$, and reorganize the products inside
  the summation. We now upper bound $d_i^{-1}$ with $1$, note that
  $\sum_{i=1}^n W_{\ell i} d_{\ell}^{-1} = 1$, and change the order of
  summation so that
  \begin{align*}
    \sum_{i=1}^n P^{k-1}_{ih} 
    & \leq  \sum_{\ell_{k-2}=1}^n \cdots \sum_{\ell_1=1}^n
    \sum_{i=1}^n  
    (W_{\ell_1i} d_{\ell_1}^{-1}) (W_{\ell_2\ell_1} d_{\ell_2}^{-1}) \cdots
    (W_{\ell_{k-2}\ell_{k-3}} d_{\ell_{k-2}}^{-1}) W_{h\ell_{k-2}}
    \\
    &= \sum_{\ell_{k-2}=1}^n W_{h\ell_{k-2}}  = d_h.
  \end{align*}
  We plug this inequality into~\eqref{eq:tmp-gamma} and adopt some additional bounds to obtain
  \begin{align*} 
    \gamma_j^{(k)} &\leq \sum_{h\in N_j(\delta)} d_h P_{hj}
    + \frac{n}{\delta} \leq \Big(\max_{h\in N_j(\delta)} d_h\Big) \gamma_j^{(1)}+ \frac{n}{\delta}.
  \end{align*}
  Because $W$ is binary, we know that $h\in N_j(\delta)$ implies
  $d_h\leq \delta$ (see also definition~\eqref{def:Njdelta}). The last
  inequality becomes:
  \begin{align*} 
    \frac{1}{n}\gamma_j^{(k)} &\leq \delta \frac{1}{n} \gamma_j^{(1)}+ \frac{1}{\delta}.
  \end{align*}
  By selecting $\delta = \Big(\frac{1}{n}\gamma_j^{(1)}\Big)^{1/2}$, we obtain that
  each column average of the $P^k$ satisfies
  \begin{align*} 
    \frac{1}{n}\gamma_j^{(k)} &\leq 2 \Big(\frac{1}{n}\gamma_j^{(1)}\Big)^{1/2}.
  \end{align*}
  This inequality immediately implies inequality~\eqref{bound:main} in the theorem statement.
\end{proof}  

We are now ready to prove the main result of this section.

\begin{proof}[Proof of Theorem~\ref{thm:prominent-equal}]
    We start by showing the equivalence \ref{fact:pe:one-time-wise} $\iff$
    \ref{fact:pe:uniformly-wise}.  The fact that one-time wisdom
    (statement~\ref{fact:pe:one-time-wise}) implies pre-uniform wisdom
    (statement~\ref{fact:pe:uniformly-wise}) for equal-neighbor sequences
    is a direct consequence of inequality~\eqref{bound:main} in
    Lemma~\ref{lemma:uniform-onenorm-bound}.  The converse implication
    \ref{fact:pe:uniformly-wise} $\implies$ \ref{fact:pe:one-time-wise} is
    trivially true and stated in Lemma~\ref{lemma:basic-implications}.

  Next, we show the equivalence between the absence of prominent
  individuals (statement~\ref{fact:pe:prominent-individual}) and
  one-time wisdom (statement~\ref{fact:pe:one-time-wise}). To do so,
  we prove the equivalence between the existence of prominent
  individuals and the lack of one-time wisdom.
  First, assume there exists a prominent individual, that is, as
  mentioned in Remark~\ref{remark:prominent-ind}, a sequence of nodes
  $\{\supscr{k}{n}\in\until{n}\}_{n\in\natural}$ such that node
  $\supscr{k}{n}$ possesses order~$n$ neighbors in $\supscr{G}{n}$
  with order~$1$ degree.  In other words there exists a constant
  $\delta>0$ such that, recalling the definition in
  equation~\eqref{def:Njdelta}, we have $|N_{\supscr{k}{n}}(\delta)|$
  is of order $n$.  Then, from the equality~\eqref{def:Pij=diinv-Wij},
  \begin{align*}
    \frac{1}{n} \onenorm{\supscr{P}{n}}
    &= \frac{1}{n}\max_{j\in\until{n}} \sum_{i=1}^n    \supscr{P}{n}_{ij} \geq
    \frac{1}{n} \sum_{i=1}^n \supscr{P}{n}_{i\supscr{k}{n}}\\
    & = \frac{1}{n}  \sum_{i \in N_{\supscr{k}{n}}} d_i^{-1} \supscr{W}{n}_{i \supscr{k}{n}}
    \geq \frac{1}{n \delta}  \sum_{i \in   N_{\supscr{k}{n}}(\delta) } \supscr{W}{n}_{i \supscr{k}{n}}
    \\
    &= \frac{1}{n \delta}  |N_{\supscr{k}{n}}(\delta)| .
  \end{align*}
  This inequality shows that $\lim_{n\to\infty} \onenorm{\frac{1}{n}
    P^{[n]}} $ cannot vanish and, therefore, the sequence of
  equal-neighbor matrices fails to be one-time wise.

  Second, assume the sequence of equal-neighbor matrices fails to be
  one-time wise. Then there exist a constant $\alpha>0$ and a time $N$
  such that, for all $n>N$,
  \begin{equation*}
    \frac{1}{n}\max_{j\in\until{n}} \sum_{i=1}^n \supscr{P}{n}_{ij} \geq \alpha.
  \end{equation*}
  In other words, there must exist a sequence of indices
  $\{\supscr{k}{n}\in\until{n}\}_{n\in\natural}$ such that $\sum_{i
    \in N_{\supscr{k}{n}}} d_i^{-1}\supscr{W}{n}_{i\supscr{k}{n}}
  >\alpha n$.  Because each degree is lower bounded by $1$, this
  inequality implies the existence of a prominent individual.
\end{proof}

\begin{remark}[Wisdom estimates for large by finite populations]
  Lemma~\ref{lemma:uniform-onenorm-bound} leads to an explicit estimation
  of the rate at which finite time wisdom is achieved, namely the rate of
  convergence of $\ave(\supscr{x}{n}(k)))$ to $\mu$. Indeed, from the proof
  of Theorem~\ref{theo:wisdom} (equality~\eqref{unbiased} and the following
  steps) and inequality~\eqref{bound:main}, we can estimate
  \begin{equation*}
    \Var[\ave(\supscr{x}{n}(k))]\leq \sigma^2
    \bigonenorm{\tfrac{1}{n}(P^{[n]})^k}\leq 2\sigma^2\bigonenorm{\tfrac{1}{n} P^{[n]}}^{1/2}.
  \end{equation*}
  Similar considerations can also be applied to the concept of uniform
  wisdom, working directly this time with the deviation probability. From
  the proof of Theorem~\ref{theo:sufficient-uniform} and applying again
  inequality~\eqref{bound:main}, we obtain
  \begin{equation*}
    \Prob\left[\sup\limits_{k\in\natural}
      |\ave(\supscr{x}{n}(k)))-\mu|\geq\delta\right]\leq
    \frac{18e\sigma^2}{\delta^2}\bigonenorm{\tfrac{1}{n} P^{[n]}}^{1/2}\taumix(\enne{P}).
  \end{equation*}
  These bounds allow us to quantify how close to wisdom (finite or uniform)
  is a large but finite group of individuals whose opinions evolve
  according to an equal neighbor French-DeGroot model. They are
  particularly effective in those cases when the quantities
  $\bigonenorm{P^{[n]}}$ and $\taumix(\enne{P})$ can be estimated, as it is
  the case for the \Erdos-\Renyi\ graphs and the preferential attachment
  model studied in Subsection~\ref{examples}.
\end{remark}





\newcommand{\dmax}{\subscr{d}{max}}
\newcommand{\dmin}{\subscr{d}{min}}

\newcommand{\dmaxn}{\enne{\subscr{d}{max}}}
\newcommand{\dminn}{\enne{\subscr{d}{min}}}

\subsection{The equal-neighbor model over prototypical random graphs}\label{examples}
For equal-neighbor models, a sufficient condition to guarantee
one-time and pre-uniformly wisdom is:
\begin{equation}
  \label{cond:dmaxmin}
  \frac{\dmax(n)}{\dmin(n)} =  o(n),
\end{equation}
where $\dmax(n)$ and $\dmin(n)$ denote, respectively, the maximum and
minimum degree of the graph as a function of the network size $n$.
Indeed, let $\enne{k}$ be a vertex of the graph and compute:
\begin{align*}
  \frac{1}{n}\sum_{i=1}^n\enne{P}_{i\enne{k}}\leq
  \frac{1}{n}\sum_{i=1}^n\frac{1}{\dmin(n)} \leq
  \frac{1}{n}\frac{\dmax(n)}{\dmin(n)},
\end{align*}
as $n\to\infty$. If condition~\eqref{cond:dmaxmin} holds, then
$\frac{1}{n}\sum_{i=1}^n\enne{P}_{i\enne{k}}\to0$ as $n\to\infty$ so
that there exist no prominent individuals.

In this section we study the \Erdos-\Renyi\ and the \Barabasi-Albert
preferential attachment models of random graph and we show that, using
condition~\eqref{cond:dmaxmin}, they are both one-time wise with high
probability.  In contexts where we have a sequence of probability
spaces labeled by parameter $n$ (in our case the number of nodes in
the graph), the locution with high probability (w.h.p.) means with
probability converging to $1$ as $n\to +\infty$. In the case of a
limit property, as the case of one-time wise, to assert that it holds
w.h.p.\ means that, for every $\epsilon >0$,
$\Prob[\onenorm{\frac{1}{n} P^{[n]}} <\epsilon]$ converges to $1$ for
$n\to +\infty$.

\begin{example}[The equal-neighbor model over \Erdos-\Renyi\ graphs]
An \Erdos-\Renyi\ graph $G(n,p)$ is a graph with $n$ vertices and with
each possible edge having, independently, probability $p$ of existing
\cite{PE-AR:60}. We focus on the case when $p={c \log (n)}/{n}$
with $c>1$. In this regime $G(n,p)$ is known to be connected and
aperiodic w.h.p.. Moreover, \cite[Lemma 6.5.2]{RD:06} implies that
there exists a constant $b>0$ such that $b\log n\leq d_i \leq 4c\log
n$ for every node $i$ w.h.p..  This bound immediately implies that
condition~\eqref{cond:dmaxmin} holds w.h.p.\ and thus the
\Erdos-\Renyi\ model is one-time wise (and pre-uniformly wise and
wise) w.h.p.. Using the fact that~\cite{RD:06} $\taumix(\enne{P})=
O(\log (n))$ w.h.p., we now prove that this model is also uniformly
wise. Indeed, w.h.p.
 \begin{equation*}
   \begin{aligned}
     \bigonenorm{\tfrac{1}{n} (\enne{P})^k} \taumix(\enne{P})
     &\leq 2\bigonenorm{\tfrac{1}{n} \enne{P}}^{1/2} \taumix(\enne{P})\\
     &= O\left(\frac{1}{n^{\frac{1}{2}}}\log n \right),
     \quad\text{as $n\to\infty$,}
    \end{aligned}
  \end{equation*}
where the first inequality follow from
equation~\eqref{bound:main}. \oprocend
\end{example}


We now present the preferential attachment model. Also in this case, it is
possible to prove that the equal-neighbor models is one time wise and uniformly
wise w.h.p..

\begin{example}[The equal-neighbor model over preferential attachment graphs]
  The \Barabasi-Albert preferential attachment model is a random graph
  generation model described as follows: vertices are added
  sequentially to the graph, new vertices are connected to a fixed
  number of earlier vertices, that are selected with probabilities
  proportional to their degrees.
  Specifically, assume a fixed number $m_0$ of initial vertices is
  given and, at every step, a new vertex is added and $m$ ($m\leq
  m_0$) new edges are added, whereby the new vertex is connected to a
  prior node $i$ with a probability proportional to the degree $d_i$
  of $i$, that is $d_i/\sum_jd_j$.  After $t$ time steps, the model
  leads to a random graph with $t+m_0$ vertices and $mt$ edges.  It is
  known~\cite{ALB-RA:99,BB-OR-JS-GT:01} that the degree distribution
  follows a power-law, that the minimum degree is $\dmin(n)=m$ (by
  construction), and that the maximum degree is of order
  $\dmax(n)\in\Theta(\sqrt{n})$ w.h.p..
  
  The equal-neighbor \Barabasi-Albert model is one-time wise (and,
  therefore, also pre-uniformly wise and wise) w.h.p.. This follows
  again by checking condition~\eqref{cond:dmaxmin}:
   \begin{equation*}
    \frac{1}{n}\frac{\dmax(n)}{\dmin(n)}=O\left(
    \frac{1}{n}\frac{\sqrt n}{m}\right)=O\left( \frac{1}{m \sqrt n
    }\right), \quad \text{as $n\to\infty$.}
  \end{equation*}
 
  Moreover, the equal-neighbor \Barabasi-Albert model is uniformly
  wise. Indeed, recall from~\cite{DA-GC-FF-AO:10} that the mixing time
  of the \Barabasi-Albert model is
  w.h.p.\ $\taumix(\enne{P})=O(\log(n))$, so that w.h.p.
  \begin{equation*}
  \begin{aligned}
    \bigonenorm{\tfrac{1}{n} (\enne{P})^k} \taumix(\enne{P})
    &\leq 2\bigonenorm{\tfrac{1}{n} \enne{P}}^{1/2} \taumix(\enne{P})\\
    &= O\left(\frac{1}{n^{1/4}}\log n\right), \quad \text{as $n\to\infty$}, 
    \end{aligned}
  \end{equation*}
  where the first inequality follows from
  inequality~\eqref{bound:main}.
  
  Finally, we consider a super-linear preferential attachment model.
  Notice that the \Barabasi-Albert model is a linear preferential
  attachment in the sense that the probability of choosing a node in
  the network is linear in the degree of the nodes. If we consider a
  \emph{super-linear model} with a probability of the form $\sim x^p$
  with $p>1$, then it is known~\cite{RO-JS:05} that there exists,
  w.h.p., a node with degree of order $n$, while all other nodes have
  finite degrees. It follows that a sequence of prominent individuals
  exists in large populations and that, by
  Theorem~\ref{thm:prominent-equal}, the super-linear preferential
  attachment model is neither wise nor finite-time wise. \oprocend
\end{example}

\section{Conclusions}
\label{sec:conclusions}

This paper furthers the study of learning phenomena and influence
systems in large populations. Our results provide an alternative and,
arguably, a bit more realistic characterization of wise populations in
terms of the absence of prominently influential individuals and
groups. Future work includes extending these concepts to influence
systems with time-varying and concept-dependent interpersonal weights
and to other opinion dynamic models.

\section{Acknowledgments}
The first author thanks Dr.\ Noah E.\ Friedkin for an early inspiring
discussion about \naive learning. This material is based upon work
supported by, or in part by, the U.S.\ Army Research Laboratory and the
U.S.\ Army Research Office under grant number W911NF-15-1-0577.

\bigskip\bigskip\bigskip
\bibliographystyle{plainurl}
\bibliography{main_arxiv_v2}

\end{document}